\DeclareMathOperator{\polylog}{polylog}
\newcommand{\cO}{O}
\newcommand{\A}{{\color{black} 1}}
\newcommand{\B}{{\color{blue} 2}}
\newcommand{\C}{{\color{red} 3}}
\newcommand{\D}{{\color{green!50!black} 4}}
\title{Distributed Recoloring}
\author[1]{Marthe Bonamy}
\author[2]{Paul Ouvrard}
\author[3]{Mika\"el Rabie}
\author[4]{Jukka Suomela}
\author[5]{Jara Uitto}
\affil[1]{CNRS, LaBRI, Universit\'e de Bordeaux, marthe.bonamy@u-bordeaux.fr}
\affil[2]{LaBRI, CNRS, Universit\'e de Bordeaux, paul.ouvrard@u-bordeaux.fr}
\affil[3]{Aalto University, mikael.rabie@aalto.fi}
\affil[4]{Aalto University, jukka.suomela@aalto.fi}
\affil[5]{ETH Z\"urich \& University of Freiburg, jara.uitto@inf.ethz.ch}
\date{}
\newtheorem{theorem}{Theorem}[section]
\newtheorem{lem}[theorem]{Lemma}
\theoremstyle{definition}
\newtheorem{defn}[theorem]{Definition}
\begin{document}
\maketitle

\begin{abstract}
\normalsize
Given two colorings of a graph, we consider the following problem: can we recolor the graph from one coloring to the other through a series of elementary changes, such that the graph is properly colored after each step?

We introduce the notion of \emph{distributed recoloring}: The input graph represents a network of computers that needs to be recolored. Initially, each node is aware of its own input color and target color. The nodes can exchange messages with each other, and eventually each node has to stop and output its own recoloring schedule, indicating when and how the node changes its color. The recoloring schedules have to be globally consistent so that the graph remains properly colored at each point, and we require that adjacent nodes do not change their colors simultaneously.

We are interested in the following questions: How many communication rounds are needed (in the deterministic LOCAL model of distributed computing) to find a recoloring schedule? What is the length of the recoloring schedule? And how does the picture change if we can use \emph{extra colors} to make recoloring easier?

The main contributions of this work are related to distributed recoloring with one extra color in the following graph classes: trees, $3$-regular graphs, and toroidal grids.
\end{abstract}

\section{Introduction}

In classical graph problems, we are given a graph and the task is to \emph{find} a feasible solution. In \emph{reconfiguration problems}, we are given two feasible solutions -- an input configuration and a target configuration -- and the task is to find a sequence of moves that turns the input configuration into the target configuration.

\subparagraph{Recoloring problems.}
Perhaps the most natural example of a reconfiguration problem is \emph{recoloring}: we are given a graph $G$ and two proper $k$-colorings of $G$, let us call them $s$ and $t$, and the task is to find a way to turn $s$ into $t$ by changing the color of one node at a time, such that each intermediate step is a proper coloring. More formally, the task is to find a sequence of proper $k$-colorings $x_0, x_1, \dotsc, x_L$ such that $x_0 = s$ and $x_L = t$, and $x_{i-1}$ and $x_i$ differ only at one node. Such problems have been studied extensively from the perspective of graph theory and classical centralized algorithms, but the problems are typically inherently \emph{global} and solutions are long, i.e., $L$ is large in the worst case.

In this work we introduce recoloring problems in a \emph{distributed} setting. We show that there are natural relaxations of the problem that are attractive from the perspective of distributed graph algorithms: they admit solutions that are short and that can be found \emph{locally} (e.g., in sublinear number of rounds). Distributed recoloring problems are closely related to classical symmetry-breaking problems that have been extensively studied in the area of distributed graph algorithms, but as we will see, they also introduce new kinds of challenges.

\begin{figure}
	\centering
	\includegraphics[scale=0.8]{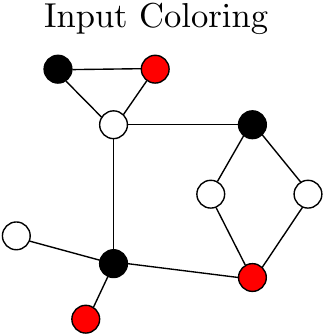}
	\quad
	\includegraphics[scale=0.8]{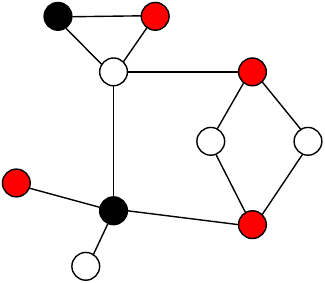}
	\quad
	\includegraphics[scale=0.8]{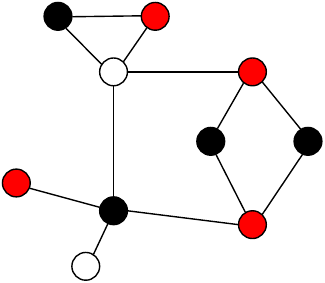}
	\quad
	\includegraphics[scale=0.8]{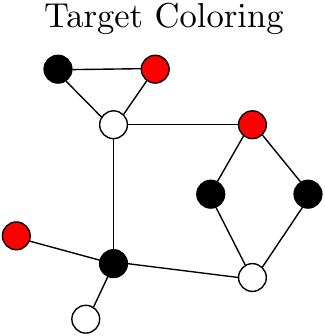}
	\caption{Distributed recoloring: the input coloring $s$ can be seen on the left and the target coloring $t$ on the very right. The illustration shows one possible way to reach the target coloring in three steps by, in each step, changing the colors of an independent set of nodes.}
	\label{fig:recolor}
\end{figure}

\subparagraph{Distributed recoloring.}

We will work in the usual LOCAL model of distributed computing: Each node $v \in V$ of the input graph $G = (V,E)$ is a computer, and each edge $e \in E$ represents a communication link between two computers. Computation proceeds in synchronous rounds: each node sends a message to each of its neighbors, receives a message from each of its neighbors, and updates its local state. Eventually, all nodes have to announce their local outputs and stop; the running time of the algorithm is the number of communication rounds until all nodes stop. We assume that the algorithm is deterministic, and each node is labeled with a unique identifier.

In \emph{distributed recoloring}, each node $v \in V$ is given two colors, an \emph{input color} $s(v)$ and a \emph{target color} $t(v)$. It is guaranteed that both $s$ and $t$ form a proper coloring of $G$, that is, $s(u) \ne s(v)$ and $t(u) \ne t(v)$ for all $\{u,v\} \in E$. Each node $v \in V$ has to output a finite \emph{recoloring schedule} $x(v) = \bigl(x_0(v), x_1(v), \dotsc, x_\ell(v)\bigr)$ for some $\ell = \ell(v)$. For convenience, we define $x_i(v) = x_\ell(v)$ for $i > \ell(v)$. We say that the node \emph{changes its color at time $i > 0$} if $x_{i-1}(v) \ne x_i(v)$; let $C_i$ be the set of nodes that change their color at time $i$. Define $L = \max_v \ell(v)$; we call $L$ the \emph{length} of the solution. A solution is feasible if the following holds:
\begin{enumerate}
\item $x_0 = s$ and $x_L = t$,
\item $x_i$ is a proper coloring of $G$ for all $i$,
\item $C_i$ is an independent set of $G$ for all $i$.
\end{enumerate}
The key differences between distributed recoloring and classical recoloring are:
\begin{enumerate}
\item Input and output are given in a distributed manner: no node knows everything about $G$, $s$, and $t$, and no node needs to know everything about $x_i$ or the length of the solution $L$.
\item We do not require that only one node changes its color; it is sufficient that adjacent nodes do not change their colors simultaneously.
\end{enumerate}
See Figure~\ref{fig:recolor} for a simple example of distributed recoloring steps.

Note that a solution to distributed recoloring is locally checkable in the following sense: to check that a solution is feasible, it is enough to check independently for each edge $\{u,v\} \in E$ that the recoloring sequences $x(u)$ and $x(v)$ are compatible with each other, and for each node $v \in V$ that $x(v)$ agrees with $s(v)$ and $t(v)$. However, distributed recoloring is not necessarily an LCL problem \cite{naor1995can} in the formal sense, as the length of the output per node is not a priori bounded.

We emphasize that we keep the following aspects well-separated: what is the complexity of \emph{finding} the schedule, and how \emph{long} the schedules are. Hence it makes sense to ask, e.g., if it is possible to find a schedule of length $\cO(1)$ in $\cO(\log n)$ rounds (note that the physical reconfiguration of the color of the node may be much slower than communication and computation).

\subparagraph{Recoloring with extra colors.}

Recoloring is computationally very hard, as solutions do not always exist, and deciding whether a solution exists is PSPACE-hard. It is in a sense analogous to problems such as finding an \emph{optimal} node coloring of a given graph; such problems are not particularly interesting in the LOCAL model, as the complexity is trivially global. To make the problem much more interesting we slightly relax it.

We define a \emph{$k+c$ recoloring problem} (a.k.a.\ \emph{$k$-recoloring with $c$ extra colors}) as follows:
\begin{itemize}
\item We are given colorings with $s(v), t(v) \in [k]$.
\item All intermediate solutions must satisfy $x_i(v) \in [k+c]$.
\end{itemize}
Here we use the notation $[n] = \{1,2,\dotsc,n\}$.

\begin{figure}[b]
	\centering
	\includegraphics{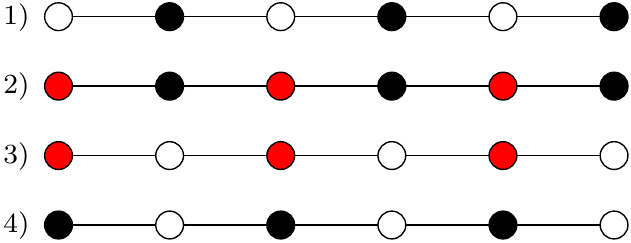}
	\caption{In the input graph, a bipartition is given. Therefore, the target coloring can be reached by using one extra color in three steps.}
	\label{fig:line}
\end{figure}

The problem of $k+c$ recoloring is meaningful also beyond the specific setting of distributed recoloring. For example, here is an example of a very simple observation:
\begin{lem}\label{lem:bipartite}
Recoloring with $1$ extra color is always possible in any bipartite graph, with a distributed schedule of length $L = 3$.
\end{lem}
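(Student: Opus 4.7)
The plan is to exploit the bipartition $(A,B)$ of $G$ (which we assume is provided along with the input, as indicated in Figure~\ref{fig:line}) and recolor the two sides alternately, using the extra color $k+1$ as a temporary ``parking'' color for one side. Concretely, the three-step schedule I propose is:
\begin{enumerate}
\item At time $1$, every node $v \in A$ sets $x_1(v) = k+1$; every $v \in B$ keeps $x_1(v) = s(v)$.
\item At time $2$, every node $v \in B$ sets $x_2(v) = t(v)$; every $v \in A$ keeps $x_2(v) = k+1$.
\item At time $3$, every node $v \in A$ sets $x_3(v) = t(v)$; every $v \in B$ keeps $x_3(v) = t(v)$.
\end{enumerate}
This requires no communication: each node only needs to know its own side of the bipartition and its own colors $s(v)$ and $t(v)$.

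Then I would verify the three feasibility conditions in the order they appear in the definition. The endpoint condition $x_0 = s$, $x_3 = t$ is immediate. Each change set $C_i$ is a subset of either $A$ or $B$ and hence independent, since $G$ is bipartite. For properness of the intermediate colorings, note that in $x_1$ the nodes in $A$ all receive color $k+1$ while the nodes in $B$ retain colors in $[k]$, so every edge is bichromatic; the same argument works for $x_2$, with $B$ already carrying its target colors in $[k]$ and $A$ still on $k+1$. Finally, $x_3 = t$ is proper by assumption.

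There is essentially no obstacle here: the only thing one has to check is that when the nodes in $A$ transition from the ``parked'' color $k+1$ to $t(v)$ in step $3$, no conflict with the $B$-side appears, and this is guaranteed precisely because $t$ is a proper coloring of $G$. The proof therefore reduces to writing down the schedule and verifying the three conditions edge by edge; the conceptual content is simply that one extra color suffices to decouple the two sides of a bipartite graph during recoloring.
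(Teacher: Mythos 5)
Your schedule is exactly the one the paper uses: park one side of the bipartition on color $k+1$, move the other side to its target colors, then bring the parked side to its targets. The verification you sketch is correct and the approach is the same as the paper's.
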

\begin{proof}
Let the bipartition be $V = V_1 \cup V_2$. First each node $v \in V_1$ switches to $k+1$, then each $v \in V_2$ switches to color $t(v)$, and finally each $v \in V_1$ switches to color $t(v)$.
\end{proof}
Incidentally, it is easy to extend this result to show that $k$-recoloring with $c = \chi-1$ extra colors is always possible with a schedule of length $\cO(c)$ in a graph with chromatic number $\chi$, and in particular $k$-recoloring with $c = k-1$ extra colors is trivial.
Figure~\ref{fig:line} gives an illustration of recoloring a bipartite graph with one extra color.

As a corollary, we can solve distributed $k+1$ recoloring in trees in $\cO(n)$ rounds, with a schedule of length $\cO(1)$: simply find a bipartition and apply the above lemma. However, is this optimal? Clearly finding a bipartition in a tree requires $\Omega(n)$ rounds, but can we solve recoloring with $1$ extra color strictly faster?

These are examples of problems that we study in this work. We initiate the study of distributed complexity of recoloring, with the ultimate objective of finding a complete characterization of graph families and parameters $k$, $c$, and $L$ such that distributed $k+c$ recoloring with schedules of length $L$ can be solved efficiently in a distributed setting.

As we will see, the problem turns out to be surprisingly rich already in very restricted settings such as grids or $3$-regular trees. Many of the standard lower bound techniques fail; in particular, known results on the hardness of graph coloring do not help here, as we are already given two proper colorings of the input graph.

\subparagraph{Contributions.}

Our main contribution is a comprehensive study of the complexity of distributed recoloring in various graph families; the results are summarized in Tables \ref{tab:cycles}--\ref{tab:3reg}. The highlights of this work are the following results:
\begin{enumerate}
    \item \textbf{\boldmath An algorithm for $3+1$ recoloring on trees.} On trees, $3$-recoloring is inherently global: it is easy to see that the worst-case running time is $\Theta(n)$ and the worst-case schedule length is $\Theta(n)$. With one extra color, we can trivially find a schedule of length $\cO(1)$ in time $\cO(n)$. However, we show that we can do much better: it is possible to find a schedule of length $\cO(1)$ in time $\cO(\log n)$.
    
    Here the key component is a new algorithm that solves the following sub-problem in $\cO(\log n)$ rounds: given a tree, find an independent set $I$ such that the removal of $I$ splits the tree in components of size $1$ or $2$. This subroutine may find applications in other contexts as well.
    
    These results are presented in Section~\ref{sec:treepositive}.
    \item \textbf{\boldmath An algorithm for $3+1$ recoloring for graphs of degree at most $3$.} In general graphs, $3+1$ recoloring is not necessarily possible; we can construct a small $4$-regular graph in which $3+1$ recoloring is not solvable. However, we will show that if the maximum degree of the graph is at most $3$ (i.e., we have a \emph{subcubic} graph), $3+1$ recoloring is always possible. Moreover, we can find a schedule of length $\cO(\log n)$ in time $\polylog(n)$.
    
    This result is presented in Section~\ref{sec:subcubicpositive}.
    
    \item \textbf{\boldmath Complexity of $3+1$ recoloring on toroidal grids.}
    We also give a complete characterization of $3+1$ recoloring in one particularly interesting family of $4$-regular graphs: $2$-dimensional toroidal grids (a.k.a.\ torus grid graphs, Cartesian graph products of two cycles). While the case of $1$-dimensional grids (cycles) is easy to characterize completely, the case of $2$-dimensional grids turns out to be much more interesting.
    
    Here our main contribution is the following graph-theoretic result: in an $h \times w$ toroidal grid, $3+1$ recoloring is possible for any input if and only if (i)~both $h$ and $w$ are even, or (ii)~$h = 4$, or (iii)~$w = 4$. In all other cases we can find $3$-colorings $s$ and $t$ such that $t$ is not reachable from $s$ even if we can use $1$ extra color.
    
    As a simple corollary, $3+1$ recoloring is inherently global from the perspective of distributed computing, and it takes $\Theta(n)$ rounds to solve even if we have the promise that e.g.\ $h$ and $w$ are even (and hence a schedule of length $\Theta(1)$ trivially exists).
    
    This result is presented in Section~\ref{sec:grids}.
\end{enumerate}
Additionally, several simple upper and lower bounds and corollaries are given in Sections \ref{sec:simpleub} and~\ref{sec:simplecor}.

\subparagraph{Motivation.}

As a simple application scenario, consider the task of reconfiguring a system of unmanned aerial vehicles. Here each node is an aircraft, the color corresponds to an altitude range, and an edge corresponds to a pair of aircraft whose paths might cross and hence need to be kept at different cruising altitudes to avoid collisions.

For each aircraft there are designated areas in which they can safely change their altitude. To reconfigure the entire system, we could take all aircraft to these areas simultaneously. However, this may be a costly maneuver.

Another possibility is to reserve a longer timespan during which a set $X$ of aircraft may change their altitudes, whenever they happen to be at convenient locations. Now if we let two aircraft $u, v \in X$ change their altitudes during the same timespan, we need to ensure that any intermediate configuration is safe, regardless of whether $u$ or $v$ happens to change its altitude first. Furthermore, we would like to complete reconfiguration in minimal time (short schedule), and we would like to waste precious airspace as little as possible and hence keep as few altitude levels as possible in reserve for reconfiguration (few extra colors).

This scenario -- as well as many similar scenarios, such as the task of reconfiguring the frequency bands of radio transmitters in a manner that never causes interference, even if the clocks are not perfectly synchronized -- give rise to the following variant of distributed recoloring that we call \emph{weak recoloring}: if two adjacent nodes $u$ and $v$ change their color simultaneously at time $i$, then
$\bigl\{ x_{i-1}(u), x_i(u) \bigr\} \cap \bigl\{ x_{i-1}(v), x_i(v) \bigr\} = \emptyset$,
that is, we have a proper coloring regardless of whether $u$ or $v$ changes its color first.

Let us now contrast weak recoloring with \emph{strong recoloring}, in which adjacent nodes never change colors simultaneously. Trivially, strong recoloring solves weak recoloring. But the converse is also true up to constant factors: if we have $k$ input colors and a solution to weak recoloring of length $L$, then we can also find a solution to strong recoloring of length $kL$. To see this, we can implement one weak recoloring step in $k$ strong recoloring substeps such that in substep $j$ nodes of input color $j$ change their colors.

As our focus is on the case of a small number of input colors, we can equally well study strong or weak recoloring here; all of our results hold for either of them. While weak recoloring is closer to applications, we present our results using strong recoloring, as it has a more convenient definition.

\section{Related work}

\subparagraph{Reconfiguration and recoloring.}

Recoloring, and more generally combinatorial reconfiguration has received attention over the past few years. Combinatorial reconfiguration problems consist of finding step-by-step transformations between two feasible solutions such that all intermediate results are also feasible. They model dynamic situations where a given solution is in place and has to be modified, but no disruption can be afforded. We refer the reader to the nice survey~\cite{Jansurvey} for a full overview, and focus here on node coloring as a reference problem. 

As mentioned earlier, we introduce distributed recoloring here, but centralized recoloring has been studied extensively before. Two main models are considered:
\begin{enumerate}
    \item \emph{Node recoloring:} at each step, we can recolor a node into a new color that does not appear on its neighborhood
    \item \emph{Kempe recoloring:} at each step, we can switch the colors in a bichromatic component (we operate a Kempe change).
\end{enumerate}
 
The usual questions are of the form: Given a graph $G$ and an integer $k$, are all its $k$-colorings equivalent (up to node or Kempe recolorings)? What is the complexity of deciding that? What is the maximum number of operations needed to go from to the other? 
 
All of those questions can also be asked for two specific $k$-colorings $s$ and $t$ of $G$. Are they equivalent (up to node or Kempe recolorings)? What is the complexity of deciding that? What is the maximum number of operations needed to go from $s$ to $t$ in $G$?
 
While the complexity of questions related to Kempe recoloring remains elusive, the problems related to node recoloring are typically PSPACE-hard~\cite{bonsma2009finding}. The related question of deciding equivalence when a bound on the length of an eligible recoloring sequence is given as part of the input has also been considered \cite{bonsma2014complexity}. We know that the maximum number of operations needed to go from one $3$-coloring to another in a tree is $\Theta(n)$~\cite{cereceda2011finding}. While $(\Delta+1)$-recoloring a graph with no node of degree more than $\Delta$ is not always possible, having $\Delta+2$ colors always suffices \cite{jerrum1995very}, and there are also meaningful results to obtain for the problem of $(\Delta+1)$-recoloring~\cite{feghali2016reconfigurations}. Two other settings have received special attention: characterizing fully when $3$-recoloring is possible \cite{cereceda2011finding,cereceda2009mixing}, and guaranteeing short reconfiguration sequences in the case of sparse graphs for various notions of sparse \cite{bonamy2013recoloring,bousquet2016fast}. 

Kempe changes were introduced in 1879 by Kempe in his attempted proof of the Four Color Theorem~\cite{kempe79}. Though this proof was fallacious, the Kempe change technique has proved useful in, for example, the proof of the Five Color Theorem and a short proof of Brooks' Theorem. Most works on the topic initially focused on planar graphs, but significant progress was recently obtained in more general settings. We know that all $k$-colorings of a graph with no node of degree more than $k$ are equivalent (w.r.t.\ Kempe changes), except in the case of one very specific graph: the $3$-prism~\cite{bonamy2015conjecture,feghali2017kempe,meyniel3}.

Note that some other variants have also been studied, perhaps most notably the question of how many nodes to recolor at once so that the graph can be recolored \cite{mcdonald2015connectedness}.

While we will not discuss Kempe recoloring in our work, we point out that recoloring with extra colors is closely connected to Kempe recoloring: Kempe recolorability implies recolorability with one extra color (while the converse is not true). Hence the negative results related to one extra color also hold for Kempe recoloring.

\subparagraph{Distributed graph coloring.}

Panconesi and Srinivasan~\cite{panconesi1995local} have used Kempe operations to design efficient distributed algorithms for graph coloring with $\Delta$ colors. Other than that we are not aware of prior work on distributed recoloring.
On the other hand, the literature on the standard distributed coloring is vast. The best overview on the topic is the book by Barenboim and Elkin \cite{barenboim2013distributed};
the most important recent developments include the following results.
There is a randomized $O\bigl(\log^* n + 2^{\sqrt{\log \log n}}\bigr)$ -time algorithm for $(\Delta + 1)$-coloring by Chang et al.~\cite{Chang2018}.
In the case of trees, the number of colors can be reduced to $\Delta$ with the cost of increasing the runtime to $O(\log_{\Delta} \log n)$~\cite{Chang2016}.
On the deterministic side, the best known $(\Delta + 1)$-coloring algorithm requires $O(\Delta^{3/4} \log \Delta + \log^* n)$ communication rounds~\cite{Barenboim2015}.
In the case of trees, the \emph{rake-and-compress} -method by Miller and Reif gives a $3$-coloring in time $O(\log n)$~\cite{MillerR89}.

However, there seems to be surprisingly little technology that one can directly transfer between the coloring domain and recoloring domain. Toroidal grids are a good example: by prior work \cite{brandt2017lcl}, $3$-coloring is an inherently global problem, and by the present work, $3+1$ recoloring is an inherently global problem, but the arguments that are used in these proofs are very different (despite the fact that both of them are related to the idea that a ``parity'' is preserved).

\section{Preliminaries}
In this article, each graph $G=(V,E)$ is a simple undirected graph where $V$ represents its node set and $E$ its edge set. For a subset of nodes $S \subseteq V$, we denote by $G[S]$ the subgraph induced by $S$. For a node $u \in V$, we denote by $N(u)$ the \emph{open neighborhood} of $u$ that is the set of all the neighbors of $u$ and by $N[u]$ its \emph{closed neighborhood} i.e.\ the set $N(u) \cup \{u\}$. For a subset $S \subseteq V$, its closed neighborhood corresponds to the set $\bigcup_{u \in S} N[u]$.

The \emph{degree} of a node is the number of neighbors. A \emph{$k$-regular graph} is a graph in which all nodes have degree $k$, a \emph{cubic graph} is the same thing as a $3$-regular graph, and a \emph{subcubic graph} is a graph in which all nodes have degree at most $3$. A \emph{tree} is a connected acyclic graph, and a \emph{$k$-regular tree} is a tree in which each node has degree $1$ or $k$.

A \emph{maximal independent set} (MIS) $S \subseteq V$ is an independent set (i.e.\ a set of  pairwise non-adjacent nodes) such that for each non-MIS node $u \notin S, N(u) \cap S \neq \emptyset$. 

Given a graph $G=(V,E)$, a \emph{list-assignment} is a function which assigns to each node $v \in V$ a list of colors $L(v)$. An \emph{$L$-coloring} of $G$ is a function $c$ that assigns to each node $v \in V$ a color $c(v) \in L(v)$ such that for any two adjacent nodes $u, v \in V$, we have $c(u) \neq c(v)$. A graph $G$ is \emph{$k$-list-colorable} if it admits an \emph{$L$-coloring} for every list-assignment where the list of each node is of size at least $k$. Therefore, list-coloring generalizes node-coloring if we consider the special case where each node receives the same input list. The notion of \emph{$L$-recoloring} is the natural generalization of $k$-recoloring: the same elementary steps are considered, and every intermediate coloring must be an $L$-coloring. 

In order to output a recoloring schedule, it is convenient to consider the question of recoloring a graph $G$ from a coloring $s$ to a coloring $t$, rather than the more symmetric question of whether the two colorings are equivalent in the given setting. We take this opportunity to note that we can reverse time and hence recoloring schedule from $s$ to $t$ also yields a recoloring schedule from $t$ to $s$. In the rest of the paper, we therefore address the two questions as one.

\section{Warmup -- simple results}\label{sec:simpleub}

We will start by presenting a number of simpler upper and lower bounds that also serve as an introduction to the topic of distributed recoloring.

\subsection{Upper bounds}

\begin{lem}\label{lem:minusone}
In any graph, $k+c$ recoloring for $c = k-1$ is possible in $0$ communication rounds, with a schedule of length $\cO(k)$.
\end{lem}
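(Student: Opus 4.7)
The idea is to design a fully oblivious schedule: each node $v$ will compute its recoloring schedule from $s(v)$ and $t(v)$ alone, using the $k-1$ extra colors $\{k+1,\ldots,2k-1\}$ as a ``parking lot''. This immediately gives $0$ communication rounds.

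\textbf{The schedule.} I would split time into two phases, each roughly of length $k$.
\emph{Parking phase, steps $1,\ldots,k-1$.} At step $i$, every node $v$ with $s(v)=i$ changes to the extra color $k+i$. Nodes with $s(v)=k$ are never parked.
\emph{Unparking phase, steps $k,\ldots,2k-1$.} At step $k-1+j$, every node $v$ with $t(v)=j$ changes to color $j$.
Total length $2k-1=\cO(k)$, and each node locally reads off its two activations from $(s(v),t(v))$.

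\textbf{Verification.} The main things to check are that each step recolors an independent set and that every intermediate coloring is proper; both will follow from the hypothesis that $s$ and $t$ are proper colorings.
For the independent set condition: the set of nodes active at step $i\le k-1$ is the color class $s^{-1}(i)$, and the set active at step $k-1+j$ is $t^{-1}(j)$; each is independent because $s$ and $t$ are proper.
For properness of intermediate colorings: after parking step $i$, the only changed nodes now carry color $k+i$, which no other node has ever carried at this point, so no new conflict is created. For an unparking step $k-1+j$ (with $j\le k-1$), I would argue that just before the step, color $j$ is unused (since every node that originally had $s(v)=j$ has been parked at $k+j$), and the nodes moving into $j$ form an independent set, so properness is preserved. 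The step $k-1+k=2k-1$ needs a separate sanity check: at this final step, the only nodes that can still be adjacent to a node moving to color $k$ must have $t(u)\ne k$, and hence were either already unparked to $t(u)\in[k-1]\ne k$ or (in the degenerate case $s(u)=k=t(u)$) are themselves the unchanged endpoint; either way no conflict arises.

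\textbf{Expected obstacle.} There is essentially no hard step — the only subtlety is keeping the indexing consistent, specifically making sure that nodes with $s(v)=k$ (which skip the parking phase) do not collide with a neighbor during the unparking phase. This is handled by the observation above that any neighbor of a node being unparked to color $j$ either already sits at a color in $[k]\setminus\{j\}$ or still sits in the parking range $\{k,\ldots,2k-1\}$, so the two cannot match.
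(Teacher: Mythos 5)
Your proposal is correct and follows essentially the same idea the paper has in mind: the paper's proof is a one-liner ("generalize the idea of Lemma~\ref{lem:bipartite}; note that the schedule of node $v$ depends only on $s(v)$ and $t(v)$"), and your park-then-unpark schedule, with color class $s^{-1}(i)$ parked at $k+i$ for $i\in[k-1]$ and color class $t^{-1}(j)$ unparked at step $k-1+j$, is exactly the natural way to fill in that sketch. One minor slip: in your discussion of the final step $2k-1$, the parenthetical case ``$s(u)=k=t(u)$'' cannot occur for a neighbor $u$ of a node moving to $k$, since you just noted $t(u)\ne k$; that case is vacuous and can simply be dropped, leaving the rest of the argument sound.
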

\begin{proof}
Generalize the idea of Lemma~\ref{lem:bipartite}; note that the schedule of node $v$ depends only on $s(v)$ and $t(v)$, and not on the colors of any other node around it.
\end{proof}

\begin{lem}\label{lem:3paths}
In paths and trees, $3$-recoloring is possible in $\cO(n)$ rounds, with a schedule of length $\cO(n)$.
\end{lem}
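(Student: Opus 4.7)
The plan is to combine a trivial information-gathering protocol with the classical centralized bound on the recoloring diameter of trees. Since a tree on $n$ vertices has diameter at most $n-1$, in $\cO(n)$ communication rounds every node can collect the entire topology of $T$ together with the complete colorings $s$ and $t$. After this gathering phase every node sees exactly the same instance, so it can simulate any fixed deterministic centralized procedure locally (breaking any ties with the unique identifiers) to produce one globally consistent schedule, and then simply output its own coordinate of that schedule.

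The only remaining ingredient is the existence of a centralized recoloring sequence of length $\cO(n)$. This is precisely the classical result of Cereceda, van den Heuvel, and Johnson~\cite{cereceda2011finding}, already recalled in the related work section: any two proper $3$-colorings of a tree on $n$ vertices can be transformed into each other by a sequence of $\cO(n)$ single-node recolorings. A single-node recoloring is a special case of an independent-set recoloring (the singleton), so the centralized sequence translates verbatim into a valid distributed schedule of the same length, completing the proof.

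The main conceptual point of the lemma is not that any individual step is hard, but that for trees no genuine distributed machinery is needed. A tree is ``global enough'' relative to its diameter that the brute-force learn-everything approach already matches the known centralized bound, so the only real work is invoking that bound as a black box. The part one might worry about is whether a single-node centralized sequence is always realizable in our independent-set model, but since singletons are independent sets this is automatic; the distributed formalism only adds genuine difficulty once we insist on sublinear round complexity, which is exactly the setting the later sections of the paper address.
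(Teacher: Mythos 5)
The communication part of your argument (gather the whole instance in $\cO(n)$ rounds, then have every node simulate the same deterministic centralized procedure) is exactly what the paper does and is unproblematic. The gap is in the schedule length. You bound the schedule by the number of \emph{single-vertex} recolorings in a centralized sequence, and you claim this is $\cO(n)$ by citing \cite{cereceda2011finding}. That is not what that result gives: the shortest single-vertex transformation between two $3$-colorings is only bounded by $\cO(n^2)$ in general, and this is \emph{tight already on paths}. Take $s = 1,2,3,1,2,3,\dotsc$ and $t = 1,3,2,1,3,2,\dotsc$ on a path $v_1,\dotsc,v_n$: label each edge $+1$ or $-1$ according to whether the color increases or decreases by $1 \bmod 3$ along it. An internal vertex can only move when its two neighbors agree, and such a move swaps the two (opposite) labels on its incident edges; an endpoint move flips one extreme label. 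The potential $\sum_j \min(j,n-j)\cdot[w_j=+1]$ changes by at most $1$ per single-vertex move and must drop by $\Theta(n^2)$, so $\Omega(n^2)$ single-vertex steps are needed. Since your translation uses one schedule step per centralized move, your construction yields a schedule of length $\Theta(n^2)$ in the worst case, not $\cO(n)$. (The sentence ``$\Theta(n)$ operations'' in the related-work section is loosely worded and refers to the parallel schedule length, not to the number of single-vertex recolorings; you cannot use it as a black box here.)

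The point of the lemma is precisely that the independent-set relaxation buys a quadratic saving, and this requires an argument, not just a citation. The paper obtains the $\cO(n)$ schedule via Lemma~\ref{lem:recoltree}: root the tree, repeatedly identify all current leaves with their grandparents (an independent set of recolorings per step), thereby collapsing the tree to a single edge in $\cO(\mathrm{radius})$ parallel rounds; this recolors any $3$-coloring to a fixed $2$-coloring, and doing it for both $\alpha$ and $\beta$ and concatenating (one half time-reversed) gives a schedule of length $\cO(n)$. Your proof needs to be repaired by substituting such a parallel recoloring procedure for the serialized centralized sequence.
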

\begin{proof}
Every node has full knowledge of the graph. The statement can be intuited by induction on the size of the tree, but we delay a formal proof to Section \ref{sec:treepositive} and more precisely Lemma \ref{lem:recoltree}.
\end{proof}

\begin{lem}\label{lem:3plus1paths}
In cycles and paths, $3+1$ recoloring is possible in $\cO(1)$ rounds, with a schedule of length $\cO(1)$.
\end{lem}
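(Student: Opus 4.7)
The plan is to design a schedule of constant length whose per-node action depends only on the $(s(v), t(v))$-pair of $v$ together with those of its at most two neighbors. This information is collected in one round, so the whole algorithm runs in $\cO(1)$ rounds. First I would perform a uniform initial step: every node $v$ with $s(v) = 1$ changes to the extra color $4$. This is safe because $V_1 = \{v : s(v) = 1\}$ is an independent set in $s$ and all its neighbors carry colors in $\{2,3\}$. After this step the intermediate coloring has $V_1 \to 4$, $V_2 \to 2$, $V_3 \to 3$, which is proper.

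Next I would sequence the remaining changes in a small constant number of phases grouped by the nine $(s(v), t(v))$-classes. The key observation is that, since adjacent nodes must differ in both $s$- and $t$-values, the set of nodes sharing a fixed $(s,t)$-class is an independent set and can therefore change simultaneously. In each phase, one or several classes are moved either directly to the target color $t(v)$ or to a temporary color in $\{1,4\}$, chosen so that no neighbor currently holds that color. The delicate cases are local ``swap'' patterns such as an edge with classes $(2,3)$--$(3,2)$ or $(2,3)$--$(3,1)$, where a direct move would send both endpoints to a currently occupied color; for each such pattern we insert an extra detour that routes one endpoint through color $4$, with the choice of endpoint determined by the finite local configuration around it. A last uniform step brings the $V_1$-nodes from color $4$ to their target color $t(v)$, which is safe because all their neighbors have by then reached their own target colors and $t$ is proper.

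The main obstacle is verifying that, in every phase, the set of nodes changing is independent and the intermediate coloring is proper. This reduces to a finite case analysis over the at most $9 \cdot 9 = 81$ possible pairs of $(s,t)$-classes on an edge, and the non-trivial part is organising the phases so that every problematic local pattern closes within a constant number of steps. For odd cycles no separate argument is needed: color $4$ serves as the buffer that removes the parity obstruction which would otherwise block a purely bipartite-style schedule, so the same case analysis covers paths, even cycles, and odd cycles uniformly.
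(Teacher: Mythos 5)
Your proposal is in the right spirit, but it stops exactly where the actual difficulty of the lemma begins. The step ``$V_1 \to 4$'' is fine, and so is the observation that each $(s,t)$-class is an independent set. The gap is that the middle of the argument --- the explicit constant number of phases and the verification that they compose into a proper recoloring --- is asserted rather than constructed. A per-edge case analysis over the $81$ pairs of $(s,t)$-classes is not sufficient on its own: the choice of which endpoint of a ``swap'' edge detours through a temporary color must be made consistently along arbitrarily long alternating segments (e.g.\ $s = 2,3,2,3,\dotsc$ with $t = 3,2,3,2,\dotsc$), and it interacts with the first step, because a node adjacent to a $V_1$-node parked at color $4$ cannot itself detour through $4$. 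Concretely, for $s = 1,2,3,2,3,\dotsc$ with the $(2,3)$--$(3,2)$ swap target, the class-based rule ``class $(2,3)$ detours through $4$'' fails at the node next to the parked $1$, so the rule must depend on a larger neighborhood, and you never specify it or prove that a consistent one exists. Saying that ``the non-trivial part is organising the phases so that every problematic local pattern closes within a constant number of steps'' names the theorem-shaped hole in the proof without filling it.

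For comparison, the paper avoids this case analysis entirely: it computes a \emph{maximal} independent set $I$ greedily from the input $3$-coloring in $O(1)$ rounds, parks $I$ at color $4$, and observes that in a path or cycle the components of $V \setminus I$ have size at most $2$ (three consecutive non-$I$ nodes would contradict maximality). Each such component can be $3$-recolored to its target by brute force without the extra color, after which $I$ moves to its targets. If you want to salvage your class-based approach, the missing piece is precisely some device that breaks long alternating segments into bounded pieces --- which is what the MIS provides.
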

\begin{proof}
Use the input coloring to find a maximal independent set $I$. Nodes of $I$ switch to color $4$. Nodes of $V\setminus I$ induce paths of length $\cO(1)$, apply Lemma~\ref{lem:3paths} there to recolor each of the paths by brute force, without using the extra color $4$. Finally, nodes of $I$ switch to their target colors.
\end{proof}

\begin{lem}\label{lem:beyonddelta}
Let $G$ be a graph of maximum degree at most $\Delta$, and let $k \ge \Delta+2$. Then $k$-recoloring with $c$ extra colors is at least as easy as $(k-1)$-recoloring with $c+1$ extra colors.
\end{lem}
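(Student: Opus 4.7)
My plan is to give a direct reduction: assuming we have an algorithm $\mathcal{A}$ that solves $(k-1)$-recoloring with $c+1$ extra colors (total palette size $k+c$), I build an algorithm for $k$-recoloring with $c$ extra colors (also total palette size $k+c$) by sandwiching $\mathcal{A}$ between two cheap one-step ``flush'' phases that remove the extra input color $k$.

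First, I would locally define a modified source coloring $s'$ as follows. For every node $v$ with $s(v)\in[k-1]$, set $s'(v)=s(v)$. For every node $v$ with $s(v)=k$, let $s'(v)$ be the smallest color in $[k-1]$ that does not appear among $\{s(u):u\in N(v)\}$. Such a color exists because $v$ has at most $\Delta$ neighbors and the palette $[k-1]$ has $k-1\ge\Delta+1$ colors, so at most $\Delta<k-1$ forbidden values. Each node computes $s'(v)$ from its $1$-neighborhood in $s$, so this takes $0$ extra communication rounds beyond what $s$ already provides. Define $t'$ symmetrically from $t$.

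Next, I would verify that the transition $s\to s'$ forms a single legal step of the schedule. The set $C_1=\{v:s(v)=k\}$ is independent (because $s$ is proper), so condition (3) of distributed recoloring holds. To check that $s'$ is proper, take any edge $\{u,v\}$: if neither endpoint is in $C_1$, then $s'$ agrees with $s$ on them; if exactly one, say $v$, is in $C_1$, then $s'(v)$ was chosen outside $\{s(u')\colon u'\in N(v)\}$, so in particular $s'(v)\ne s(u)=s'(u)$; both endpoints cannot be in $C_1$ by independence. All colors used lie in $[k-1]\subseteq[k+c]$. The symmetric argument shows $t'\to t$ is a legal final step.

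Then I would invoke $\mathcal{A}$ on the instance $(G,s',t')$, which are both proper $(k-1)$-colorings: this produces a schedule of length $L'$ using colors in $[(k-1)+(c+1)]=[k+c]$, in whatever round complexity $\mathcal{A}$ has. Concatenating the three pieces (one step $s\to s'$, then the output of $\mathcal{A}$, then one step $t'\to t$) yields a valid schedule for $k$-recoloring with $c$ extra colors of length $L'+2$, and the round complexity exceeds that of $\mathcal{A}$ only by $O(1)$ rounds needed to compute $s'$ and $t'$ locally.

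There is no real obstacle: the only thing to double-check is that the extra color budget adds up correctly ($(k-1)+(c+1)=k+c$) and that the hypothesis $k\ge\Delta+2$ is used exactly where needed, namely to guarantee that a color in $[k-1]$ avoiding all $\le\Delta$ neighbor colors always exists. I would mention this explicitly at the point where $s'(v)$ and $t'(v)$ are chosen.
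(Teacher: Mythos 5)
Your proposal is correct and matches the paper's proof essentially verbatim: both recolor the (independent) set of nodes colored $k$ in $s$ and $t$ to an available color in $[k-1]$ (possible since $k-1\ge\Delta+1$), run the $(k-1)$-recoloring algorithm with $c+1$ extra colors on $s'\to t'$, and sandwich it between the steps $s\to s'$ and $t'\to t$. Your write-up just spells out the properness and budget checks that the paper leaves implicit.
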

\begin{proof}
Given a $k$-coloring $x$, we can construct a $(k-1)$-coloring $x'$ as follows: all nodes of color $k$ pick a new color from $\{1,2,\dotsc,k-1\}$ that is not used by any of their neighbors. Note that $x \to x'$ is a valid step in distributed recoloring (nodes of color $k$ form an independent set), and by reversing the time, also $x' \to x$ is a valid step.

Hence to recolor $s \to t$ with $c$ extra colors, it is sufficient to recolor $s' \to t'$ with $c+1$ extra colors (color $k$ no longer appears in the input and target colorings and can be used as an auxiliary color during recoloring). Then we can put everything together to form a recoloring schedule $s \to s' \to t' \to t$, with only constant overhead in the running time and schedule length.
\end{proof}

\begin{lem}\label{lem:4plus1subcubic}
In subcubic graphs, $4+1$ recoloring is possible in $O(1)$ rounds, with a schedule of length $O(1)$.
\end{lem}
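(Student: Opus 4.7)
The plan is to use the extra color to ``park'' a maximal independent set and then reuse the earlier lemmas on the remaining paths and cycles. Using the input $4$-coloring $s$, I would compute a maximal independent set $I$ of $G$ in $\cO(1)$ rounds in the usual color-scan fashion: for $i=1,2,3,4$ in turn, every node $v$ with $s(v)=i$ joins $I$ iff it currently has no neighbor in $I$. Because $G$ is subcubic and $I$ is maximal, each node of $V\setminus I$ has at least one neighbor in $I$; hence the induced subgraph $H := G[V\setminus I]$ has maximum degree at most $2$ and is therefore a disjoint union of paths and cycles.

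The schedule then consists of three blocks. In the first block, every $v\in I$ switches from $s(v)$ to the extra color $5$ in a single step; this is legal because $I$ is independent and the neighbors of $I$ lie in $V\setminus I$ and still carry colors in $\{1,2,3,4\}$. In the second block, I recolor $H$ from $s|_{V\setminus I}$ to $t|_{V\setminus I}$, but using only the palette $\{1,2,3,4\}$: color $5$ is forbidden on $V\setminus I$ since every such node has an $I$-neighbor currently colored $5$. This is a $4+0$ recoloring task on a graph of maximum degree $2$, so Lemma~\ref{lem:beyonddelta} (with $\Delta=2$ and $k=4=\Delta+2$) reduces it to a $3+1$ recoloring task on $H$, and Lemma~\ref{lem:3plus1paths} then handles each component of $H$ in $\cO(1)$ rounds with an $\cO(1)$-length schedule, in parallel. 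In the third block, every $v\in I$ switches from $5$ to $t(v)$; independence of $I$ together with the properness of $t$ makes this a single legal step.

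Each block costs $\cO(1)$ rounds and contributes $\cO(1)$ to the schedule length, so the claimed bounds follow. The only thing worth double-checking is that the auxiliary color employed inside the second block by Lemma~\ref{lem:beyonddelta} (namely color $4$, freed by moving the color-$4$ nodes of $V\setminus I$ into $\{1,2,3\}$) does not collide with the color-$5$ parked set $I$, which is immediate as $4\ne 5$. I do not anticipate a substantive obstacle: the argument is a clean composition of Lemmas~\ref{lem:beyonddelta} and~\ref{lem:3plus1paths} wrapped around an MIS that is itself extracted from the input palette, with the extra color acting as a ``buffer'' that temporarily disconnects $G$ into a max-degree-$2$ graph.
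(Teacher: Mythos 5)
Your proposal is correct and follows essentially the same route as the paper's proof: extract an MIS from the input $4$-coloring in $\cO(1)$ rounds, park it on color $5$, observe that the remainder has maximum degree at most $2$, and compose Lemmas~\ref{lem:beyonddelta} and~\ref{lem:3plus1paths} before returning the MIS to its target colors. Your observation that the leftover graph may contain cycles (not just paths and isolated nodes) is a correct and slightly more careful reading than the paper's wording, and is harmless since Lemma~\ref{lem:3plus1paths} covers cycles as well.
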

\begin{proof}
Use the input coloring to find a maximal independent set $I$ in constant time. Nodes of $I$ switch to color $5$. Delete $I$; we are left with a graph $G'$ that consists of paths and isolated nodes. Apply Lemmas \ref{lem:beyonddelta} and \ref{lem:3plus1paths} to solve $4+0$ recoloring in each connected component of $G'$. Finally nodes of $I$ can switch to their target colors.
\end{proof}

\begin{lem}\label{lem:4plus2grids}
In toroidal grids, $4+2$ recoloring is possible in $O(1)$ rounds, with a schedule of length $O(1)$.
\end{lem}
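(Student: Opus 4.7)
The plan is to reduce to Lemma~\ref{lem:4plus1subcubic} by first setting aside a maximal independent set under one of the two extra colors, so that the remaining subgraph becomes subcubic and can be recolored using the other extra color. Concretely, let $G$ be the toroidal grid and let $5$ and $6$ denote the two extra colors.

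First, using the input coloring $s$ (which is already a proper $4$-coloring), compute a maximal independent set $I$ of $G$ in $O(1)$ rounds: initialize $I=\emptyset$ and, for $i = 1, 2, 3, 4$, add to $I$ every node of input color $i$ that has no neighbor currently in $I$. This terminates in $4 = O(1)$ rounds. Then, as the first schedule step, every node of $I$ switches from $s(v)$ to the extra color $5$. This is a legal step because $I$ is independent and color $5$ is not used anywhere else in $G$.

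Next, observe that by maximality of $I$, every node $v \notin I$ has at least one neighbor in $I$, and that neighbor currently holds color $5 \notin \{1,2,3,4\}$. Therefore in the induced subgraph $G[V \setminus I]$ every node has degree at most $3$, i.e.\ $G[V \setminus I]$ is subcubic. Apply Lemma~\ref{lem:4plus1subcubic} to $G[V \setminus I]$ with source $s|_{V\setminus I}$ and target $t|_{V\setminus I}$, using color $6$ (the still-unused extra) as the auxiliary color. This produces an $O(1)$-round, $O(1)$-length schedule whose intermediate colorings lie in $\{1,2,3,4,6\}$, so they can never clash with the nodes of $I$ that hold color $5$ throughout this phase. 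Finally, as the last schedule step, each $v \in I$ switches from $5$ to $t(v)$; this is valid because $I$ is independent and $t$ is a proper coloring, so $t(v)$ is distinct from the current colors of all neighbors of $v$.

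Summing up, we use $O(1)$ communication rounds (MIS construction plus the invocation of Lemma~\ref{lem:4plus1subcubic}) and a schedule of length $1 + O(1) + 1 = O(1)$. There is essentially no obstacle here beyond careful color bookkeeping: color $5$ must be reserved exclusively as the ``holding cell'' for $I$ and color $6$ exclusively for the subcubic recoloring phase, so that the two uses of extra colors never collide. The argument relies on $G$ being $4$-regular only to the extent that removing a MIS yields a subcubic subgraph, and does not otherwise exploit the toroidal structure.
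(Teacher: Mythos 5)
Your proof is correct and follows essentially the same route as the paper: reserve one extra color to park a maximal independent set, observe that the remaining graph is subcubic, invoke Lemma~\ref{lem:4plus1subcubic} with the other extra color, and finally release the parked nodes to their targets. The only differences are cosmetic (which extra color is used for the parked set) and that you spell out the $O(1)$-round MIS construction and the non-clash bookkeeping, which the paper leaves implicit.
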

\begin{proof}
Pick a maximal independent set $I$, color it with color $6$, and delete; we have a graph of degree at most $3$ and $1$ extra color. Apply Lemma~\ref{lem:4plus1subcubic} to recolor it, and finally nodes of $I$ can switch to their target colors.
\end{proof}

\begin{lem}\label{lem:5plus1grids}
In toroidal grids, $5+1$ recoloring is possible in $O(1)$ rounds, with a schedule of length $O(1)$.
\end{lem}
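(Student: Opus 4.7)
The plan is to imitate the strategy of Lemmas~\ref{lem:4plus1subcubic} and~\ref{lem:4plus2grids}, using the extra color to park a maximal independent set and thereby reduce the effective maximum degree. Let $I$ be a maximal independent set of the toroidal grid $G$; since the input $s$ is already a proper $5$-coloring, we can compute $I$ in $\cO(1)$ rounds by a standard reduction from coloring to MIS. As the first step of the schedule, every node of $I$ moves from its input color to the extra color $6$. This is a legal move: $I$ is independent, no neighbor of an $I$-node is in $I$, and no non-$I$-node currently uses color $6$.

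The key observation is that the graph $G' = G[V \setminus I]$ has maximum degree at most $3$: every node of $G$ has $4$ neighbors in $G$, and since $I$ is \emph{maximal}, every non-$I$ node loses at least one neighbor (to $I$) when passing to $G'$. Hence $G'$ is subcubic. Moreover, during the rest of the schedule the nodes of $I$ will sit at color $6$, which blocks color $6$ from being used by any of their neighbors; so the nodes of $V \setminus I$ must be recolored from $s|_{V\setminus I}$ to $t|_{V\setminus I}$ using only colors in $[5]$. In other words, we need to solve $5+0$ recoloring (equivalently, $5$-recoloring with no extra color) on the subcubic graph $G'$.

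For this subproblem we invoke the reductions already in hand. Since $G'$ is subcubic we have $5 \ge \Delta(G')+2$, so Lemma~\ref{lem:beyonddelta} reduces $5$-recoloring on $G'$ to $4$-recoloring with one extra color on $G'$, and Lemma~\ref{lem:4plus1subcubic} solves the latter in $\cO(1)$ rounds with schedule length $\cO(1)$. Concretely this produces schedules on $V \setminus I$ that only use colors in $[5]$, which are compatible with the $I$-nodes staying at color $6$ throughout. As the final step, each node of $I$ switches from color $6$ to its target color $t(v)$; this is legal because its neighbors now hold their target colors from $t|_{V\setminus I} \subseteq [5]$, so color $t(v) \in [5]$ conflicts with none of them by properness of $t$.

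Correctness is just a matter of concatenating these three phases, and the round complexity and schedule length are constant in each phase. There is no real obstacle beyond noticing that removing a MIS in a $4$-regular graph lands precisely at the threshold $k \ge \Delta+2$ needed to apply Lemma~\ref{lem:beyonddelta}; this is what makes one extra color suffice in the toroidal grid once one starts from $5$ input colors rather than $4$.
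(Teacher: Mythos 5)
Your proof is correct and follows exactly the paper's approach: park a maximal independent set on the extra color~$6$, observe that the remainder is subcubic, and chain Lemma~\ref{lem:beyonddelta} (with $k=5=\Delta+2$) into Lemma~\ref{lem:4plus1subcubic}. The only difference is that you spell out the details (constant-round MIS from the given $5$-coloring, the legality of the first and last steps) that the paper leaves implicit.
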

\begin{proof}
Pick a maximal independent set $I$, color it with color $6$, and delete; we have a graph of degree at most $3$ and $5+0$ colors remaining. Apply Lemma~\ref{lem:beyonddelta} to reduce to the case of $4+1$ colors, and then use Lemma~\ref{lem:4plus1subcubic}.
\end{proof}

\begin{lem}\label{lem:MISplusforest}
For any graph $G$ on $n$ nodes, for any two $k$-colorings $\alpha, \beta$ of $G$, if we can compute in $\cO(f(n))$ rounds an MIS $S$ such that $V\setminus S$ induces a forest of trees of depth at most $\cO(d(n))$,
we can compute in $\cO(f(n)+d(n))$ rounds  how to $(k+1)$-recolor $G$ from $\alpha$ to $\beta$ with schedule of length $\cO(d(n))$.
\end{lem}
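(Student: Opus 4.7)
The plan is to use the extra color $k+1$ as a temporary parking slot for the independent set $S$, and to reduce the rest of the work to $k$-recoloring the forest $F = G[V\setminus S]$ from $\alpha|_{V\setminus S}$ to $\beta|_{V\setminus S}$. In $\cO(f(n))$ rounds we compute $S$; the resulting schedule will then consist of three phases of total length $1+\cO(d(n))+1=\cO(d(n))$.

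The opening single step moves every $v\in S$ in parallel from $\alpha(v)$ to $k+1$; this is valid because $S$ is independent and color $k+1$ does not appear in $\alpha$ anywhere. The closing single step is its mirror: every $v\in S$ jumps from $k+1$ to $\beta(v)$, which is consistent since $\beta$ is a proper coloring of $G$ and, by that point, every neighbor of $v$ (which lies in $V\setminus S$) is colored according to $\beta$. In between, we must recolor $F$ from $\alpha|_{V\setminus S}$ to $\beta|_{V\setminus S}$ using only colors in $[k]$: by maximality of $S$, each $u\in V\setminus S$ has a neighbor in $S$ stuck at $k+1$, so $u$ must avoid $k+1$ during the middle phase but is otherwise free. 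The connected components of $F$ can be handled independently and in parallel; each is a tree of depth $\cO(d(n))$, whose entire structure each of its nodes can collect in $\cO(d(n))$ rounds, after which every node computes its piece of the schedule locally.

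\textbf{Main obstacle.} The one ingredient I treat as a black box is the claim that $k$-recoloring (for $k\geq 3$) a tree of depth $d$ admits a schedule of length $\cO(d)$, rather than the weaker $\cO(n)$ stated in Lemma~\ref{lem:3paths}. This depth-sensitive strengthening is what the formal tree-recoloring argument (promised as Lemma~\ref{lem:recoltree}) has to deliver: a root-aware recursion in which, at each level, the parent's current color forbids only one value, so two of the three colors in $[3]\subseteq[k]$ always suffice to first move the current level out of the way and then recurse into strictly shallower subtrees, contributing only $\cO(1)$ to the schedule per level of depth. Plugging in $\cO(d(n))$ for the depth yields the required middle phase, and stringing the three phases together gives a $(k+1)$-recoloring schedule of length $\cO(d(n))$ computed in $\cO(f(n)+d(n))$ rounds.
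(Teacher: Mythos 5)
Your proof is correct and matches the paper's proof essentially verbatim: park $S$ at color $k+1$ in one step, reduce to recoloring each tree component of $G[V\setminus S]$ using Lemma~\ref{lem:recoltree} (giving schedule and round cost $\cO(d(n))$), then move $S$ to its target colors in one final step. You also correctly flag the $k\ge 3$ hypothesis that the middle phase implicitly requires, which the paper leaves unstated in this formulation of the lemma.
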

\begin{proof}
The idea is quite simple: each node in $S$ goes into color $k+1$. We then use the algorithm described in the proof of Lemma \ref{lem:recoltree} to find a recoloring with schedule of
length $\cO(d(n))$ for each connected component after the removal of $S$. After that, each node of $S$ can go to their final color.
\end{proof}

\subsection{Lower bounds}

\begin{lem}\label{lem:needsextra}
Recoloring without any extra colors is not possible in the following settings for some pairs of input and target colorings:
\begin{enumerate}[label=(\alph*),noitemsep]
\item $2$-recoloring paths or trees.
\item $2$-recoloring cycles.
\item $3$-recoloring cycles.
\item $2$-recoloring toroidal grids.
\item $3$-recoloring toroidal grids.
\item $4$-recoloring toroidal grids.
\item $5$-recoloring toroidal grids.
\item $2$-recoloring cubic graphs.
\item $3$-recoloring cubic graphs.
\item $4$-recoloring cubic graphs.
\end{enumerate}
\end{lem}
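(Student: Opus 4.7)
The plan is to dispatch all ten cases uniformly by exhibiting, for each, a \emph{frozen} proper $k$-coloring $s$ on a graph of the given class -- that is, a proper $k$-coloring in which every node $v$ already sees each of the $k-1$ non-$s(v)$ colors on some neighbor. If $s$ is frozen then no node can legally change color from $s$: for any candidate $v \in C_1$, every alternative color is blocked by some neighbor of $v$, and simultaneous changes do not help because $C_1$ is independent, so each node in $C_1$ still sees all its neighbors' original colors. Hence $x_1 = s$, and inductively $x_L = s$, so for any proper target $t \neq s$ there can be no schedule.

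The two-color cases (a), (b), (d), (h) are automatic: the unique-up-to-swap proper $2$-coloring of any connected bipartite graph in the class has every node's sole alternative color present on every neighbor, so I take $s$ to be such a $2$-coloring and $t = \bar s$. Concrete witnesses are a single edge for (a), a $4$-cycle for (b), a $4\times 4$ bipartite toroidal grid for (d), and $K_{3,3}$ for (h). For the remaining cases I give explicit frozen colorings: for (c), the triangle $K_3$ with $s=(1,2,3)$ and $t=(1,3,2)$; for (e), the $3\times 3$ toroidal grid with $s(i,j)=(i+j)\bmod 3$, whose four neighbors realize $s\pm 1\bmod 3$; for (f), the $4\times 4$ toroidal grid with $s(i,j)=(2i+j)\bmod 4$, whose four neighbors realize $s\pm 1$ and $s\pm 2\bmod 4$; for (g), the $5\times 5$ toroidal grid with $s(i,j)=(2i+j)\bmod 5$, whose four neighbors realize all four other colors; for (i), the triangular prism $K_3\square K_2$ (triangles $u_1u_2u_3$, $v_1v_2v_3$ joined by $u_iv_i$) with $s(u_i)=i$ and $s(v_i)=(i\bmod 3)+1$; and for (j), $K_4$ with any proper $4$-coloring. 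The target $t$ is in each case a non-trivial color permutation of $s$.

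The verification is mechanical in every case: I would check \emph{properness} (for the toroidal-grid constructions this is the statement that the period of the modular formula divides both grid dimensions, so the torus identifications introduce no monochromatic edges) and \emph{freezing} (listing the neighbor-offsets modulo $k$ and confirming they cover $\{1,\dots,k-1\}$). The only mild obstacle lies in case (i): not every proper $3$-coloring of the prism is frozen, so I would exhibit the specific ``shifted'' coloring above, verify freezing at one representative node, and invoke the vertex-transitive automorphism of the prism to conclude freezing at every other node.
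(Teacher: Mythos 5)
Your proposal is correct and matches the paper's proof essentially verbatim: both exhibit a frozen proper coloring $s$ (every node already sees all $k-1$ other colors on its neighbors) together with a target obtained by permuting colors, using the same witnesses (edge, $4$-cycle, triangle, the $(i+j)$ and $(2i+j)$ modular grid colorings, $K_{3,3}$, the shifted prism, $K_4$). One tiny remark: in case (i) your caution is unnecessary, since every proper $3$-coloring of the prism is rainbow on each triangle and hence automatically frozen.
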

\begin{proof}
We can construct a source coloring in which no node can make a move, and a target coloring different from the input coloring. Here we show examples of the source coloring $s$; the target coloring can be constructed by $t(v) \equiv s(v) + 1 \bmod k$:
\begin{enumerate}[label=(\alph*)]
\item A path with $2$ nodes, $s = \begin{bmatrix}1 & 2\end{bmatrix}$.
\item A $4$-cycle, $s = \begin{bmatrix}1 & 2 & 1 & 2\end{bmatrix}$.
\item A $4$-cycle, $s = \begin{bmatrix}1 & 2 & 3\end{bmatrix}$.
\item A $4 \times 4$ grid,
$
  s = \begin{bsmallmatrix}
  1 & 2 & 1 & 2 \\
  2 & 1 & 2 & 1 \\
  1 & 2 & 1 & 2 \\
  2 & 1 & 2 & 1
  \end{bsmallmatrix}
$.
\item A $3 \times 3$ grid,
$
  s = \begin{bsmallmatrix}
  1 & 2 & 3 \\
  2 & 3 & 1 \\
  3 & 1 & 2
  \end{bsmallmatrix}
$.
\item A $4 \times 4$ grid,
$
  s = \begin{bsmallmatrix}
  1 & 2 & 3 & 4 \\
  3 & 4 & 1 & 2 \\
  1 & 2 & 3 & 4 \\
  3 & 4 & 1 & 2
  \end{bsmallmatrix}
$.
\item A $5 \times 5$ grid,
$
  s = \begin{bsmallmatrix}
  1 & 2 & 3 & 4 & 5 \\
  3 & 4 & 5 & 1 & 2 \\
  5 & 1 & 2 & 3 & 4 \\
  2 & 3 & 4 & 5 & 1 \\
  4 & 5 & 1 & 2 & 3
  \end{bsmallmatrix}
$.
\item Complete bipartite graph $K_{3,3}$, with $s$ constructed from the bipartition.
\item Prism graph: connect the nodes of a $3$-cycle colored with $\begin{bmatrix}1 & 2 & 3\end{bmatrix}$ to another $3$-cycle colored with $\begin{bmatrix}2 & 3 & 1\end{bmatrix}$, in this order.
\item Complete graph $K_4$. \qedhere
\end{enumerate}
\end{proof}

\begin{lem}\label{lem:3pathslb}
In paths and trees, $3$-recoloring without extra colors requires $\Omega(n)$ rounds and produces schedules of length $\Omega(n)$ in the worst case. This holds also in the case of $3$-regular trees.
\end{lem}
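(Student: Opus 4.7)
The plan is to construct a pair $(s,t)$ of proper $3$-colorings of a path for which every recoloring schedule has length $\Omega(n)$, and then use a locality argument to upgrade this to a round-complexity lower bound. The extension to $3$-regular trees is by a caterpillar construction.

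I would take the path $v_1,\dots,v_n$ with $s$ the rainbow coloring $1,2,3,1,2,3,\dots$ and $t$ its reverse $1,3,2,1,3,2,\dots$. Both are proper, and in both every interior vertex sees the two colors other than its own on its two neighbors, so no interior vertex can change color in the first step. For a proper $3$-coloring $c$, define a height function by $h_c(v_1)=0$ and $h_c(v_{i+1}) = h_c(v_i) + \delta_i$, where $\delta_i = +1$ if $c(v_{i+1}) \equiv c(v_i)+1 \pmod 3$ and $\delta_i = -1$ otherwise; write $H(c) = h_c(v_n)$. Two invariance facts do all the work: (i)~an interior recoloring preserves $H$, because a valid 3-recoloring move on an interior vertex forces both neighbors to share a color (so the two incident increments are already opposite), and the move flips both signs, keeping their sum zero; (ii)~an endpoint recoloring flips a single incident increment, changing $H$ by exactly $\pm 2$. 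Since $H(s) = n-1$, $H(t) = -(n-1)$, and at most both endpoints change in a single step, any schedule has length at least $(n-1)/2 = \Omega(n)$.

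To obtain the round-complexity bound I would localize the argument: a simple induction on $\min(i,n-i+1)$ shows that the first step at which $v_i$ can change color is at least $\min(i,n-i+1)$, because an interior vertex remains frozen until one of its neighbors has changed, so the wave of permitted changes propagates from each endpoint at unit speed. Consequently $\ell(v_{n/2}) = \Omega(n)$. A standard locality argument in the LOCAL model then finishes the job: if some algorithm had complexity $R = o(n)$, the output at $v_{n/2}$ would be a deterministic function of its $R$-ball, and by choosing identifiers adversarially this $R$-ball can be made identical across an infinite family of paths of growing length, in all of which the ``middle'' vertex needs $\ell = \Omega(n)$ -- contradicting finiteness of a single output.

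For the $3$-regular tree case I would replace the path by a caterpillar with spine $v_1,\dots,v_k$ of length $k = \Theta(n)$, one pendant leaf at each interior spine vertex, and two pendants at each spine endpoint, so that every vertex has degree $1$ or $3$. I would color the spine by rainbow and reverse-rainbow as above, and give each pendant leaf a color consistent with both $s$ and $t$ of its spine neighbor (so leaves never have to change). The height function defined on spine edges is unaffected by leaves, interior spine vertices remain frozen, and the $\Omega(n)$ bounds transfer. The two subtle points are the ``in-tandem'' sign flip of increments that underpins invariance of $H$ (special to 3-recoloring, where rigidity of interior moves does most of the work), and the use of the per-vertex length $\ell(v_{n/2})$ rather than the global $L$ in the locality argument, since the latter could otherwise be trivially padded.
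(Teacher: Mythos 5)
Your proof is correct, and its core coincides with the paper's (very terse) argument: in the rainbow-colored path $1,2,3,1,2,3,\dots$ every degree-$2$ node sees both other colors and is frozen until a spine neighbor moves, so permitted changes propagate from the endpoints at unit speed, the middle node's first change time is $\Omega(n)$, and the standard indistinguishability argument over growing path lengths converts this into an $\Omega(n)$ round bound; the embedding into a $3$-regular tree via a caterpillar is also exactly the paper's (one-sentence) intent. What you add on top is the height-function (winding-number) invariant with $H(s)=n-1$, $H(t)=-(n-1)$, interior moves preserving $H$ and endpoint moves shifting it by $\pm 2$. This is the classical potential argument from centralized recoloring (Cereceda et al.) and is strictly stronger than needed here: it bounds the total number of steps of \emph{any} schedule independently of which vertex one tracks, whereas the paper's freezing induction already yields both the schedule-length and the round lower bounds, since $\ell(v_{n/2})=\Omega(n)$ follows directly from the wave argument. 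So the potential-function half of your proof is a correct but redundant alternative route for the schedule-length claim; the freezing half is the one that does the work for the round complexity, and there you are aligned with the paper. One small point worth keeping in mind (though it does not affect correctness): in the caterpillar, the colors you are forced to give the pendants at a spine endpoint may or may not leave that endpoint free to move at time $1$, but the lower bound only needs interior spine vertices to be frozen until a \emph{spine} neighbor changes, which holds regardless of the pendants, so nothing breaks.
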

\begin{proof}
Consider a long path with the input coloring $1,2,3,1,2,3,\dotsc,1,2,3$ and observe that a node of degree $2$ can change its color only after at least one neighbor has changed colors. We can embed such a path also in a $3$-regular tree.
\end{proof}

\begin{lem}\label{lem:4treelb}
In trees, $4$-recoloring without extra colors requires $\Omega(\log n)$ time and produces schedules of length $\Omega(\log n)$ in the worst case.
\end{lem}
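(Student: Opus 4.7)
I plan to exhibit a single family of hard instances — a ``rigidly'' $4$-colored complete binary tree of depth $d=\Theta(\log n)$ — from which both lower bounds follow.

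\emph{Construction.} Take the tree $T$ in which the root $c$ has three children, every other internal node has one parent and two children, and the leaves are at depth $d$ (so $n=\Theta(2^d)$, every internal node has degree $3$). Build the input coloring $s$ recursively: put $s(c)=1$ with the three neighbors of $c$ colored $2,3,4$; then for any internal node $v$ of color $s(v)$ whose parent has color $p$, give its two children the two colors of $\{1,2,3,4\}\setminus\{s(v),p\}$. By design, in $s$ every internal node sees all three colors distinct from its own among its neighbors and is therefore \emph{locked} — it has no legal move. Finally let $t=\sigma\circ s$ where $\sigma$ transposes colors $1$ and $2$; $t$ is again a proper $4$-coloring and $t(c)\ne s(c)$.

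\emph{Schedule-length bound.} Fix a valid schedule $s=x_0,x_1,\dots,x_L=t$ and let $\tau(v)$ be the first time $v$ changes color. I will inductively construct a descending chain $c=v_0,v_1,\dots,v_k$ with $v_{i+1}$ a child of $v_i$ and $\tau(v_0)>\tau(v_1)>\cdots>\tau(v_k)\ge 1$; this forces $L\ge\tau(c)\ge d+1=\Omega(\log n)$. For the inductive step, immediately before $v_i$ first moves, the multiset of colors in its open neighborhood must be a proper subset of $\{1,2,3,4\}\setminus\{s(v_i)\}$ (otherwise $v_i$ is still locked and has no legal move), so some neighbor of $v_i$ has moved earlier. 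By strict monotonicity, the parent $v_{i-1}$ along the chain has $\tau(v_{i-1})>\tau(v_i)$ and therefore has not moved yet; hence the responsible neighbor is a child of $v_i$, which we take as $v_{i+1}$. The chain can terminate only at a leaf, so $k\le d$, and a strictly decreasing sequence of positive integers of length $d+1$ gives the claim.

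\emph{Running-time bound.} For the round-complexity bound I use an indistinguishability argument against a hypothetical deterministic $T$-round algorithm $\mathcal{A}$ with $T=o(\log n)$. I plan to pair the instance $(s,t)$ above with a second ``companion'' instance on the same underlying tree that agrees with $(s,t)$ outside a bounded ball around $c$ but for which an empty (or trivial) schedule suffices, arranged so that a chosen leaf $\ell$ at depth $d\gg T$ with $s(\ell)=t(\ell)$ in both instances has isomorphic $T$-views (using worst-case identifiers and the symmetry of the rigid coloring). Determinism forces $\mathcal{A}$ to output the same schedule at $\ell$ in both instances; yet the chain argument of the previous paragraph shows that in the original instance some leaf must move at time~$1$ with colors constrained by the moves higher up in the chain, while in the companion instance no such chain is required and any such early move at $\ell$ propagates up into an obstruction. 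Tuning the companion instance so that this mismatch cannot be avoided by a uniform ``safe'' schedule at the leaves yields the desired contradiction.

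\emph{Main obstacle.} The schedule-length bound falls out cleanly from rigidity and the chain argument, so the main difficulty lies in the running-time bound. The indistinguishability construction must simultaneously (i) preserve properness of $s$ and $t$ across the boundary where the companion instance is ``glued'' to the rigid one, (ii) make the two instances truly identical in the $T$-neighborhood of the chosen leaf (including all targets and identifiers), and (iii) rule out the algorithm's escape of outputting a single uniform ``everyone moves at time~$1$'' schedule at the leaves — this last point is the most delicate, since naive constructions allow the leaves to perform a ``fake'' move-and-return that happens to be valid in both instances.
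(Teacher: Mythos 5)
Your construction is exactly the paper's: a rigid $4$-coloring of a $3$-regular tree in which every internal node sees all three other colors, built top-down from the root, with all leaves at depth $d=\Theta(\log n)$. Your schedule-length argument (the descending chain with strictly decreasing first-move times $\tau$, terminating at a leaf because only leaves are unlocked in $s$) is a correct and more detailed version of the paper's one-line justification that ``a non-leaf node can change its color only once its neighbor has changed its color.'' Two small points there: you need $t(\text{root})\neq s(\text{root})$ so that the root must move at all, which your transposition $\sigma$ does guarantee; and the inequality ``$k\le d$'' should read $k\ge d$ (in fact $k=d$, since the chain descends one level per step and ends at a leaf at depth $d$) --- that is the direction the bound $\tau(c)\ge k+1\ge d+1$ actually needs.

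The genuine gap is the running-time bound, and you have correctly diagnosed it yourself: your ``companion instance'' argument is a plan with unresolved obstacles, not a proof. The core difficulty you identify is real --- there is no a priori upper bound on when a node may schedule its first move, so showing that a node's view is consistent with two different distances to the nearest leaf only yields two \emph{lower} bounds on $\tau$ and no contradiction; any contradiction must come from an inconsistency between the committed schedules of adjacent nodes, which is exactly the ``fake move-and-return'' escape you flag. You should know, however, that the paper does not do more here: it derives both the schedule-length and the round lower bound from the same rigidity observation, with no separate indistinguishability argument (the same is true of its $\Omega(n)$ bound for $3$-recoloring paths). So your first half already matches the level at which the paper proves this lemma, and your second half is an attempt to be more rigorous than the source; as written it does not close, and to submit it as a complete proof you would either need to carry out the adversarial ID/companion construction in full or explicitly adopt the paper's informal ``the cascade must propagate from the leaves, hence information must travel $\Omega(\log n)$ hops'' standard of argument.
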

\begin{proof}
It is sufficient to construct a $3$-regular tree in which each node is surrounded by nodes of all other colors: color the root with color $1$ and its neighbors with colors $2$, $3$, and $4$. Then recursively for each leaf node of color $x$ that is already adjacent to a node of color $y$, add two new neighbors with the colors $\{1,2,3,4\} \setminus \{x,y\}$, etc., and continue in a balanced manner such that the distance between the root and the nearest leaf is logarithmic. Now a non-leaf node can change its color only once its neighbor has changed its color.
\end{proof}

\section{Recoloring algorithm for trees}\label{sec:treepositive}

In this section, we provide two efficient algorithms for recoloring and list-recoloring trees. Note that Theorem~\ref{thm:tree4} is tight; see the full version for more details.

\begin{theorem}\label{thm:tree3-1}
For any $k\in \mathbb{N}$, for every tree $T$ on $n$ nodes, for any two $k$-colorings $\alpha, \beta$ of $T$, we can compute in $\cO(\log n)$ rounds how to recolor $T$ from $\alpha$ to $\beta$ with $1$ extra color and a schedule of length $\cO(1)$.
\end{theorem}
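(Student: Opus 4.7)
The argument splits into cases based on $k$. For $k \le 2$, Lemma~\ref{lem:bipartite} applies directly: a proper $2$-coloring of a tree is a bipartition, so each node already knows which side it is on from its input color, and the schedule of Lemma~\ref{lem:bipartite} can be executed in $0$ communication rounds with length $3$. (Connected components of a forest are handled in parallel.)

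For $k \ge 3$, the plan is a ``shattering'' approach whose core is the following subroutine, which constitutes the main technical ingredient: in $O(\log n)$ rounds, compute an independent set $I \subseteq V$ such that every connected component of $T - I$ has at most $2$ vertices. Given such an $I$, the recoloring schedule proceeds in three stages. First, every node of $I$ switches simultaneously to color $k+1$; this is legal since $I$ is independent and $k+1$ is unused outside $I$. Second, each connected component of $T - I$ — either an isolated vertex or an edge $\{u,v\}$ — is recolored in parallel from $\alpha$ to $\beta$ using only colors in $[k]$ (all external neighbors hold $k+1$, so this imposes no further restriction). An isolated vertex takes at most one step; for an edge a short case analysis shows that $3$ steps always suffice, the only non-trivial case being the full swap $\beta(u) = \alpha(v)$, $\beta(v) = \alpha(u)$, which we break by first moving $u$ to a color in $[k] \setminus \{\alpha(u), \alpha(v)\}$ — available because $k \ge 3$ — then moving $v$ to $\beta(v)$, then moving $u$ to $\beta(u)$. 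Third, all nodes of $I$ switch simultaneously to $\beta(v)$; this is legal because $I$ is independent and the non-$I$ neighbors already hold their $\beta$-values, which differ from $\beta(v)$ by properness of $\beta$. The overall schedule length is $O(1)$, and the running time is dominated by the shattering subroutine.

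The main obstacle is the shattering subroutine itself. The plan is to adapt the rake-and-compress framework of Miller and Reif~\cite{MillerR89}, which decomposes a tree into $O(\log n)$ ``layers'' in $O(\log n)$ rounds by iteratively raking leaves and compressing degree-$2$ chains. A chain of length $m$ can be tiled by placing every third vertex into $I$, leaving a clean sequence of singleton and size-$2$ components along the chain; a bunch of leaves raked at a common parent $p$ can be handled either by placing $p$ into $I$ (its leaves then form singletons of $T - I$) or by placing the leaves into $I$ and letting $p$ belong to a size-$1$ or size-$2$ component with one of its non-raked neighbors. The delicate combinatorial point, which I expect to be the hardest part of the proof, is to orchestrate these local decisions consistently across the $O(\log n)$ layers so that the resulting $I$ is globally independent and the component-size bound holds at the interfaces where pieces from different layers meet; this presumably requires tracking a small amount of ``boundary state'' for each contracted piece indicating which of its interface vertices are already committed to $I$ and which remain free, so that pieces at the next layer know which choices are compatible with theirs.
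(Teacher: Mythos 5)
Your overall plan is exactly the paper's: compute an independent set $I$ such that $T - I$ has components of size at most $2$, push $I$ to the extra color, recolor the tiny components internally (where the swap case for an edge needs $k \ge 3$, which you correctly isolate), and finish $I$. That top-level argument is correct and complete. The gap is that the shattering subroutine --- which you yourself identify as ``the main technical ingredient'' and ``the hardest part of the proof'' --- is only sketched, and the part you defer (``orchestrate these local decisions consistently across the $O(\log n)$ layers \dots this presumably requires tracking a small amount of boundary state'') is precisely where all the difficulty lives. As written, there is no argument that your per-layer choices (put $p$ in $I$ vs.\ put its raked leaves in $I$; tile a chain by every third vertex starting \emph{where}?) can be made independent across the layer interfaces while still leaving no component of size $3$ straddling two layers. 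A proof proposal that names the hard step and says ``presumably'' it can be done does not discharge it.

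For comparison, the paper resolves this by first extracting from rake-and-compress only a static labeling with two structural guarantees (Definition~\ref{def:hlabel}: a node of label $i$ has at most two neighbors of label $\ge i$, at most one of label $\ge i+1$; and no two adjacent label-$i$ nodes both have a higher-labeled neighbor), and then building $S$ in a single top-down pass over labels $i = h, \dotsc, 1$: first add every label-$i$ node whose higher-labeled neighbor avoided $S$ (an independent set by the second guarantee --- this is exactly the ``boundary state'' you were groping for, and it is forced rather than chosen), then greedily complete $S$ to a maximal independent set among nodes of label $\ge i$ using a $3$-coloring for symmetry breaking. The component-size bound then follows because a label-$i$ node outside $S$ with two label-$i$ neighbors outside $S$ would have no neighbor in $S$ at all, contradicting maximality. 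If you want to complete your proof, you should either reconstruct something equivalent to this labeling-plus-top-down-MIS argument or give a genuinely different consistent rule for the interfaces; the current text does neither.
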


\begin{theorem}\label{thm:tree4}
For every tree $T$ on $n$ nodes and any list assignment $L$ of at least $4$ colors to every node of $T$, for any two $L$-colorings $\alpha, \beta$ of $T$, we can compute in $\cO(\log n)$ rounds how to $L$-recolor $T$ from $\alpha$ to $\beta$ with schedule of length $\cO(\log n)$.
\end{theorem}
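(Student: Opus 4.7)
\textbf{Proof plan for Theorem \ref{thm:tree4}.}
My plan is to combine a Miller--Reif rake-and-compress tree decomposition with a short local list-recoloring subroutine that exploits the size-$4$ lists. First, I would compute a deterministic rake-and-compress decomposition of $T$ in $\cO(\log n)$ rounds, using Linial's $\cO(\log^{*} n)$-round path coloring to break long degree-$2$ chains during compress steps. The output is a level labeling $\ell : V(T) \to \{1, \ldots, h\}$ with $h = \cO(\log n)$ such that at each layer $\ell^{-1}(j)$ the nodes decompose into vertex-disjoint constant-size \emph{pieces} (isolated leaves for rake rounds, short degree-$2$ paths for compress rounds); each such piece has at most two neighbors in the reduced tree $T^{(j+1)} := T[\ell^{-1}(\{j+1,\ldots,h\})]$.

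Next, I would build the recoloring schedule by processing layers outside-in, from level $1$ up to level $h$, assigning each layer a constant-length block of time slots (two slots suffice, as explained below). Within a block the pieces of that layer act in parallel, since they are pairwise non-adjacent in $T$. Inside a piece $P$ at level $j$, we compute locally a short recoloring sub-schedule that takes each $v \in P$ from $\alpha(v)$ toward $\beta(v)$. A node $v \in P$ can move directly to $\beta(v)$ whenever its at-most-two higher-level neighbors (still at their $\alpha$-colors) are not currently colored $\beta(v)$; otherwise $v$ moves to a carefully chosen \emph{intermediate} color $\gamma(v) \in L(v)$ and defers its final transition to $\beta(v)$. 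Because $|L(v)|=4$ while the forbidden set (current colors of the $\le 2$ higher-level neighbors, plus $\beta(v)$ itself when that is disallowed) has bounded size, such a $\gamma(v)$ always exists.

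The main obstacle is to show that every deferred move can be safely completed within the block of the adjacent higher-level layer, so that each node incurs only $\cO(1)$ color changes in total. When $P$ at level $j$ is eventually processed, we first execute $P$'s own transitions (moving each $v \in P$ from $\alpha(v)$ or $\gamma(v)$ toward $\beta(v)$), and then, in a second parallel step of the same block, run any deferred finish-offs for the lower-level neighbors of $P$; these lower-level neighbors are pairwise non-adjacent in $T$, being siblings around $P$, so they can finish in parallel without conflict. An inductive analysis along the layers, exploiting the degree bound on the ``inside'' of each piece together with the freedom provided by the size-$4$ lists, shows that the intermediate colors $\gamma$ can be chosen so that this scheme is globally consistent, yielding a distributed schedule of length $\cO(h) = \cO(\log n)$ as required.
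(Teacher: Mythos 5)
Your decomposition step matches the paper (a rake-and-compress layering computed in $\cO(\log n)$ rounds, cf.\ Lemma~\ref{lem:treelabeling}), but the recoloring mechanism has a genuine gap: the claim that the forbidden set for the intermediate color $\gamma(v)$ ``has bounded size'' is false. When a node $v$ at level $j$ is processed, its constraints come not only from its $\le 2$ higher-level neighbors but from \emph{all} of its already-processed lower-level neighbors, and a tree node can have arbitrarily many of these (e.g.\ a node whose leaves were raked off in an earlier round). Their current colors can occupy three of the four colors of $L(v)$, and the higher-level neighbor can occupy the fourth, leaving $v$ with no legal move other than holding its current color. Worse, this produces unbreakable deadlocks for any scheme in which finished nodes never move again: take $L(v)=\{1,2,3,4\}$, $\alpha(v)=1$, $\beta(v)=2$, two finished lower neighbors of $v$ sitting at $\beta$-colors $3$ and $4$, and a lower neighbor $u$ with $\alpha(u)=2$, $\beta(u)=1$, whose own lower neighbors' $\beta$-colors exhaust $L(u)\setminus\{1,2\}$. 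Then $u$ and $v$ each need the color the other currently holds, neither has a free intermediate color, and no amount of waiting for higher layers resolves this, since the blockers are below $v$, not above it. Your outside-in greedy pass with $\cO(1)$ color changes per node therefore gets stuck.

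The paper avoids this with two ingredients you are missing. First, instead of greedily pushing each node to $\beta$, it selects a sparse independent set $S$ (maximal, and such that no node has two neighbors in $S$; Lemma~\ref{lem:uselightlabelsforlists}) whose removal leaves components of radius $\cO(\log n)$, and it computes a full intermediate \emph{target} coloring $\gamma$ of $T\setminus S$ avoiding both $\alpha(u)$ and $\beta(u)$ on the neighbors of each $u\in S$ (possible precisely because lists have size $4$ and each node has at most one $S$-neighbor). Second, the recolorings $\alpha\to\gamma$ and $\gamma\to\beta$ within each component are done by Lemma~\ref{lem:recoltreelist}, a non-greedy routine based on repeatedly identifying leaves with their grandparents and routing both colorings through a canonical one; crucially it may recolor a single node $\Theta(\log n)$ times, which is exactly the slack needed to undo configurations like the swap deadlock above. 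If you want to keep your layer-by-layer framework, you would need to replace the ``move once to an intermediate, once to the target'' rule by something of comparable power, i.e.\ a subroutine that can recolor an already-settled subtree many times; at that point you are essentially rebuilding the paper's argument.
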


We first discuss how to compute efficiently an independent set with some desirable properties. For this, we use a simple modification of the \emph{rake and compress} method by Reif and Miller~\cite{MillerR89}. More precisely, we iterate rake and compress operations, and label nodes based on the step at which they are reached. We then use the labels to compute an independent set satisfying given properties. We finally explain how to make use of the special independent set to obtain an efficient recoloring algorithm, in each case.

\begin{defn}\label{def:hlabel}
A \emph{light $h$-labeling} is a labeling $V \to [h]$ such that for any $i \in [h]$:
\begin{enumerate}
    \item Any node labeled $i$ has at most two neighbors with label $\geq i$, at most one of which with label $\geq i+1$.
    \item No two adjacent nodes labeled $i$ both have a neighbor with label $\geq i+1$.
\end{enumerate}
\end{defn}

\begin{lem}\label{lem:treelabeling}
There is an $O(\log n)$-round algorithm that finds a light $(2 \log n)$-labeling of a tree.
\end{lem}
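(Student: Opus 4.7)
The plan is to adapt the rake-and-compress paradigm of Miller and Reif~\cite{MillerR89}, assigning each node the label $i$ of the step at which it is peeled off the tree. I will maintain a shrinking sequence of subtrees $T = T_0 \supseteq T_1 \supseteq \cdots$, where at step $i$ I remove a set $P_i \subseteq T_{i-1}$ to form $T_i$, so ``label $\geq i$'' simply means ``still present in $T_{i-1}$''. Odd steps do a rake and even steps do a compress, so the maximum label is at most twice the number of super-steps.

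For the rake I put into $P_i$ every node of degree at most $1$ in $T_{i-1}$; such a node has at most one neighbor with label $\geq i$, and two adjacent raked leaves are each other's only neighbor, so both conditions of Definition~\ref{def:hlabel} are trivially satisfied. For the compress I consider each maximal path $v_1,\dots,v_k$ of degree-$2$ nodes in $T_{i-1}$, and if $k \geq 3$ I peel the entire path into $P_i$. Then $v_1$ and $v_k$ each have exactly one surviving neighbor (the non-degree-$2$ boundary node of the path), while every interior $v_j$ has both of its neighbors in $P_i$; so every peeled node has at most two neighbors in $T_{i-1}$ and at most one in $T_i$, giving condition~1. For condition~2, any two adjacent peeled nodes $v_j, v_{j+1}$ include at least one interior node (using $k \geq 3$), and that one has no surviving neighbor. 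Maximal degree-$2$ paths of length $k \leq 2$ are simply left alone during this compress.

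Progress per super-step will be obtained by a standard counting argument: in a tree, the number $|M|$ of high-degree nodes is at most $|L| - 2$, where $L$ is the leaf set. If $|L|$ is at least a constant fraction of $|T_{i-1}|$ then rake peels a constant fraction; otherwise the degree-$2$ nodes vastly outnumber the non-degree-$2$ nodes, the number of maximal degree-$2$ paths is correspondingly small, and most degree-$2$ nodes must therefore lie on paths of length at least $3$, which compress peels entirely. Hence each super-step removes a constant fraction of $T_{i-1}$, which gives $\cO(\log n)$ super-steps and $h \leq 2\log n$. Each step can be executed in $\cO(1)$ communication rounds, since a node can determine its current degree and its position within a maximal degree-$2$ path from a constant-radius view of the current subtree.

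The main obstacle is satisfying condition~2 of the light labeling simultaneously with constant-factor progress. The key design choice is to peel \emph{entire} maximal degree-$2$ paths of length at least $3$, rather than just their interiors or alternating nodes as a classical Miller-Reif compress would do: this ensures that between the two peeled endpoints of the path (which have surviving neighbors) there is at least one peeled interior node with no surviving neighbor, preventing any adjacent pair of peeled nodes from both having surviving neighbors. Short paths ($k \leq 2$) cannot be safely peeled in a single compress step, but the counting argument above shows that they cannot be so numerous as to stall the process: a tree dominated by short chains has linearly many leaves, which rake then removes.
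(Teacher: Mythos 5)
Your approach is the same rake-and-compress blueprint as the paper's, and the central insight --- peel \emph{entire} maximal degree-$2$ chains of length at least $3$, so that every adjacent pair of peeled nodes contains an interior node with no surviving neighbor, giving condition~2 of Definition~\ref{def:hlabel} --- matches the paper exactly, as does the $|M| \le |L|-2$ counting idea. The paper, however, performs rake and compress \emph{in the same step} $i$ on $T_{i-1}$, whereas you alternate them across two consecutive label values. Besides doubling the label count (which makes the explicit $2\log n$ bound tighter to hit), the alternation opens a gap in your progress argument: you state the leaf-versus-long-chain dichotomy for $T_{i-1}$, but your compress acts on the post-rake tree, whose chain structure can differ --- a degree-$2$ node of $T_{i-1}$ whose other neighbor is a leaf becomes degree-$1$ after raking, so long chains of $T_{i-1}$ can shorten or even fall below the threshold $3$. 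To close the gap you need to carry the ``few leaves'' case forward to the post-rake tree, e.g.\ by observing that deleting the leaves of a tree never increases the number of leaves (each new leaf can be charged to a distinct old leaf it was adjacent to). The paper sidesteps this entirely by combining the two operations on $T_{i-1}$: the removed set then contains $V_1 \cup W$ outright, where $W$ is the set of degree-$2$ nodes with both neighbors of degree $2$, and the estimate $|V_1|+|W| \ge |V(T_{i-1})|/6$ gives geometric shrinkage per single label.
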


\begin{proof}
As discussed above, we merely use a small variant of the \emph{rake and compress} method. At step $i$, we remove all nodes of degree $1$ and all nodes of degree $2$ that belong to a chain of at least three nodes of degree $2$, and assign them label $i$. 

One can check that this yields a light labeling. It remains to discuss how many different labels are used, i.e. how many steps it takes to delete the whole tree. Let us argue that no node remains after $2 \log n$ rounds. Let $T$ be a tree, let $V_1$ (resp.\ $V_2$, $V_3$) be the number of nodes of degree $1$ (resp.\ $2$, $\geq 3$) in the tree, and let $T'$ be the tree obtained from $T$ by replacing any maximal path of nodes of degree $2$ with an edge. Note that $|V(T')|=|V_1|+|V_3|$. Let $W$ be the set of nodes in $T$ that have degree $2$ with both neighbors of degree $2$. Note that $|V_2 \setminus W|\leq 2 |E(T')| =2(|V_1|+|V_3|-1)$. Note also that $|V_1|\geq |V_3|$, simply by the fact that there are fewer edges than nodes in a tree. It follows that $|W|\geq |V_2|- 2(|V_1|+|V_3|-1)= |V(T)|-|V_1|-|V_3|-2(|V_1|+|V_3|-1) \geq |V(T)|-6|V_1|$. Consequently, we obtain $|W|+|V_1|\geq \frac{|V|}6$. In other words, at every step, we remove in particular $W \cup V_1$, hence at least a sixth of the nodes. It follows that at after $k$ steps, the number of remaining nodes is at most $n \cdot \bigl(\frac56\bigr)^k$. Note that this is less than $1$ once $k \geq 2 \log n$.
\end{proof}

We now discuss how to make use of light $h$-labelings.

\begin{lem}\label{lem:uselightlabels}
For any graph $T$, any $3$-coloring $\alpha$ of $T$, and any integer $h$, let $L$ be a light $h$-labeling of $T$. There is an $O(h)$-round algorithm that finds a maximal independent set $S$ such that $T \setminus S$ only has connected components on $1$ or $2$ nodes.
\end{lem}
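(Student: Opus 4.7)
The plan is to process labels in decreasing order from $h$ down to $1$, spending $O(1)$ rounds per level, for a total of $O(h)$ rounds. At level $i$, each node $v$ with label $i$ first inspects its unique higher-labeled neighbor $p(v)$ (if any, by property~1 of the light labeling), which has already been decided by the induction. I will classify $v$ as \emph{forced-out} (setting $v \notin S$) if $p(v) \in S$, as \emph{forced-in} (setting $v \in S$) if $p(v)$ exists and $p(v) \notin S$, or as \emph{TBD} if $v$ has no parent. By properties~1 and~2, the subgraph induced by label-$i$ nodes is a disjoint union of paths in which parents appear only at endpoints and never at two adjacent same-level vertices, so the forced labels are compatible. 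The TBD nodes will then be resolved in three additional rounds via a color-greedy procedure driven by $\alpha$: at round $c \in \{1,2,3\}$, each TBD node with $\alpha(v) = c$ joins $S$ unless some same-level neighbor is already in $S$ (either forced-in or joined earlier). Since $\alpha$ is a proper coloring, adjacent TBD nodes have different colors and no race arises.

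To prove correctness, I will maintain the following invariants by induction on $i$: (a) the current $S$ is an independent set; (b) every already-decided $N$-node has a permanent $S$-neighbor; (c) every already-decided $N$-node has at most one $N$-neighbor in $T$, and this count cannot grow when lower levels are processed. The key structural observation for (c) is that every lower-labeled neighbor $w$ of $v$ has $v$ as its unique parent: by property~1 applied at $w$'s own level, $w$ has at most one neighbor with strictly larger label, and $v$ is such a neighbor. Hence, when $w$ is eventually processed, the forcing rule will mark $w$ as forced-in (in $S$) whenever $v \in N$, contributing $0$ to $v$'s $N$-count. Combined with the fact that the parent contributes $0$ as well (the parent is in $S$ for a forced-out $v$, and absent for a TBD-$N$ $v$), the only source of $N$-neighbors for $v$ is its at most two same-level neighbors, and the greedy guarantees that a TBD-$N$ node always has at least one same-level neighbor that is (and remains) in $S$.

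The main obstacle will be verifying the max-degree-$1$ invariant on same-level paths that carry nontrivial boundary forcings at their endpoints. I will handle this by observing that on any such path the restriction of $\alpha$ is a proper $3$-coloring, and then inspecting the three greedy rounds to argue that a TBD node which fails to join $S$ must have seen a same-level neighbor already in $S$ at that round: such a neighbor is either a forced-in endpoint or a sibling that joined $S$ during an earlier color round, and in either case it stays in $S$ for the remainder of the algorithm. A short case analysis over the local $\alpha$-patterns around a TBD-$N$ node then confirms its same-level $N$-neighbor count is at most one. Once this is established, $T \setminus S$ has maximum degree at most $1$, so its connected components have size at most $2$, as required; the total running time is $O(h)$ since each level takes $O(1)$ rounds.
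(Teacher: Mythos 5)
Your proposal is correct and follows essentially the same route as the paper's Algorithm~\ref{algo:stabletreedecomposition}: force a label-$i$ node into $S$ exactly when its unique higher-labeled neighbor is outside $S$, then complete $S$ greedily over the three color classes of $\alpha$, and observe that any undecided lower-labeled neighbor of a non-$S$ node will later be forced into $S$, so a non-$S$ node's only possible non-$S$ neighbor is one of its at most two same-level neighbors, of which the greedy step leaves at most one outside $S$. The explicit invariants (a)--(c) are a slightly more detailed bookkeeping of the same argument, and the final ``case analysis over local $\alpha$-patterns'' is not actually needed, since a TBD node that declines to join $S$ has, by the greedy rule itself, a same-level neighbor permanently in $S$.
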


\begin{proof}
In brief, we proceed as follows: at step $i = h, h-1, \dotsc, 1$, we first add all nodes of label $i$ which have a neighbor of label $\geq i+1$ that is not in $S$ (they form an independent set by definition of a light label), then use the $3$-coloring to obtain a fast greedy algorithm to make $S$ maximal on the nodes of label $\geq i$. The detailed algorithm can be found in the full version.

\begin{algorithm}[t]
\caption{\label{algo:stabletreedecomposition}\textsc{Decomposing into an independent set and components of size $\le2$}}
\begin{algorithmic}[1] 
\REQUIRE A tree $T$, a 3-coloring $\alpha$ and a light $h$-label of $T$.
\ENSURE A set $S$ of $V(T)$ such that $G[S]$ is an independent set and every connected component of $G[V\setminus S]$ has size at most $2$.
\FOR{$i$ from $h$ down to 1}
\FOR{$u$ with label $i$ (in parallel)}
\STATE If $u$ has a neighbor of higher label that is not in $S$, add $u$ to $S$
\ENDFOR
\FOR{$j$ from $1$ to 3}
\FOR{$u$ with label $i$ and color $j$ (in parallel)}
\STATE If $N(u)\cap S=\emptyset$, add $u$ to $S$
\ENDFOR
\ENDFOR
\ENDFOR
\end{algorithmic}
\end{algorithm}

The fact that the output $S$ is an independent set follows directly from the construction, as does the fact that the running time in $O(h)$ rounds. We note that no connected component of $T \setminus S$ contains nodes of different labels, due to the first operation at step $i$. 

It remains to argue that for any $i$, the nodes of label $i$ that do not belong to $S$ only form connected components of size $1$ or $2$. Assume for a contradiction that there is a node $u$ of label $i$ which has two neighbors $v$ and $w$, also of label $i$, such that none of $\{u,v,w\}$ belongs to $S$. By definition of a light label, the node $u$ has no other neighbor of label $\geq i$, a contradiction to the fact that we build $S$ to be an MIS among the nodes of label $\geq i$.
\end{proof}

Combining Lemmas~\ref{lem:treelabeling} and~\ref{lem:uselightlabels}, and observing that a $3$-coloring of a tree can be obtained in $O(\log n)$ rounds, we immediately obtain the following.

\begin{lem}\label{lem:treeMIS}
There is an $O(\log n)$-round algorithm that finds an MIS in a tree, such that every component induced by non-MIS nodes is of size one or two.
\end{lem}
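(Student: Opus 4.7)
The plan is to directly combine the two lemmas just proved with a standard fast tree $3$-coloring subroutine. First, I would invoke Lemma~\ref{lem:treelabeling} to compute a light $(2\log n)$-labeling of $T$ in $O(\log n)$ rounds. In parallel, I would compute a proper $3$-coloring $\alpha$ of $T$ using the Miller--Reif rake-and-compress algorithm~\cite{MillerR89}, which runs in $O(\log n)$ rounds as recalled in Section~\ref{sec:intro} (the related work). These are exactly the two ingredients required as input by Lemma~\ref{lem:uselightlabels}.

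Next, I would feed $\alpha$ together with the light labeling into the algorithm of Lemma~\ref{lem:uselightlabels}, instantiated with $h = 2\log n$. That lemma then produces, in $O(h) = O(\log n)$ further rounds, a maximal independent set $S$ of $T$ such that every connected component of $T[V \setminus S]$ has size $1$ or $2$. This is precisely the conclusion we want.

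Summing the three phases, the total round complexity is $O(\log n) + O(\log n) + O(\log n) = O(\log n)$, matching the claimed bound. There is essentially no obstacle here beyond checking that the hypotheses of Lemma~\ref{lem:uselightlabels} are met: the correctness of $S$ being an MIS with the required decomposition property follows verbatim from that lemma, and the $3$-coloring is used only internally as a constant-palette tiebreaker, so we do not need it to be related to $\alpha$, $\beta$, or the recoloring instance at hand. The mildly delicate point is simply the bookkeeping that the labeling and the $3$-coloring can both be computed in $O(\log n)$ rounds on the same tree, which is immediate.
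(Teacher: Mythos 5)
Your proof is correct and follows exactly the paper's own argument: combine Lemma~\ref{lem:treelabeling} with Lemma~\ref{lem:uselightlabels}, using an $O(\log n)$-round tree $3$-coloring as the auxiliary input. No issues.
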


We are now ready to prove Theorem~\ref{thm:tree3-1}.

\begin{proof}[Proof of Theorem~\ref{thm:tree3-1}]
First, we use Lemma~\ref{lem:treeMIS} to obtain in $O(\log n)$ rounds an MIS $S$ such that $T \setminus S$ only has connected components of size $1$ or $2$. We recolor each node in $S$ with the extra color. Remove $S$, and recolor each component from $\alpha$ to $\beta$ without using any extra colors; this can be done in $O(1)$ recoloring rounds. Each node in $S$ can then go directly to its color in~$\beta$.
\end{proof}

Moving on to the list setting, we have to use a more convoluted approach since there is no global extra color that we can use. Before discussing $4$-list-recoloring, we discuss $3$-list-recoloring. For the sake of intuition, we start by presenting an algorithm for $3$-recoloring trees, and explain afterwards how to adapt it for the list setting.

\begin{lem}\label{lem:recoltree}
For every tree $T$ with radius at most $p$ and for any two 3-colorings $\alpha, \beta$ of $T$, we can compute in $O(p)$ rounds how to $3$-recolor $T$ from $\alpha$ to $\beta$ with a schedule of length $O(p)$.
\end{lem}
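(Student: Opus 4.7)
The plan is for each node to spend $\cO(p)$ rounds gathering a complete description of $T$ together with $\alpha$ and $\beta$ (possible since the diameter is at most $2p$), and then to compute a schedule locally using a deterministic rule that is the same at every node; consistency of the distributed output is then automatic. The entire task reduces to proving the combinatorial statement: every tree of radius at most $p$ admits a distributed $3$-recoloring schedule of length $\cO(p)$ between any two $3$-colorings.

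My main attempt is a pipelined two-wave schedule. Root $T$ at a center $r$ and let $d(v)$ denote the depth of $v$, so $d(v)\le p$. Choose an auxiliary proper $3$-coloring $\gamma$ of $T$ and declare that every $v$ changes from $\alpha(v)$ to $\gamma(v)$ at time $d(v)+1$ and from $\gamma(v)$ to $\beta(v)$ at time $d(v)+p+2$. A short case analysis on which of the two wave fronts has crossed each endpoint of a given edge shows that all intermediate configurations are proper provided $\gamma$ is a proper $3$-coloring and, for every non-root node $v$, $\gamma(v) \notin \{\alpha(\mathrm{parent}(v)), \beta(\mathrm{parent}(v))\}$. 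Because the two wave fronts are separated by $p+1$ in depth they never touch the same edge simultaneously, so the independent-changes condition is satisfied, and the resulting schedule has length $2p + \cO(1) = \cO(p)$.

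The main obstacle will be that such a single $\gamma$ need not always exist: when $\alpha(\mathrm{parent}(v)) \neq \beta(\mathrm{parent}(v))$, node $v$ is forced to take the unique color outside those two, and two adjacent forced singletons can collide (for example the path $v_1 v_2 v_3 v_4 v_5$ with $\alpha=(1,2,1,2,1)$ and $\beta=(2,1,2,1,2)$ forces the value $3$ at every non-root node, which is not proper). To handle this I would chain a constant number of two-wave schedules through carefully chosen intermediate colorings $\gamma_1,\dots,\gamma_k$: first recolor $\alpha$ into a convenient $2$-coloring of $T$ in $\cO(p)$ wave-steps using the tree's bipartition, then move between two $2$-colorings in $\cO(1)$ steps via the scheme of Lemma~\ref{lem:bipartite} (with one of the three colors playing the role of the ``extra'' color), and finally symmetrically reach $\beta$ in $\cO(p)$ further wave-steps. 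Each stage is a two-wave schedule whose required intermediate coloring exists because one of its endpoints is a $2$-coloring with enough slack in the forbidden sets, so the total length stays $\cO(p)$. Combined with the $\cO(p)$-round gathering phase, this yields both bounds claimed in the lemma.
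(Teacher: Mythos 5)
Your reduction of the distributed question to a centralized one (gather the whole tree in $\cO(p)$ rounds, then apply a canonical deterministic rule) is fine and matches what the paper does implicitly; the entire burden is indeed the combinatorial claim. You also correctly identify the fatal obstacle to a single two-wave schedule: the intermediate coloring $\gamma$ is forced at every node whose parent satisfies $\alpha\neq\beta$, and adjacent forced nodes can collide. The gap is that your proposed repair does not remove this obstacle. The first stage of your chain --- recoloring an arbitrary $3$-coloring $\alpha$ to a $2$-coloring $\delta$ of $T$ by a single wave-pair --- needs a proper $\gamma$ with $\gamma(v)\notin\{\alpha(u),\delta(u)\}$ for each non-root $v$ with parent $u$, and this forbidden set has size two whenever $\alpha(u)\neq\delta(u)$, which is the generic situation; a $2$-coloring endpoint provides no slack at all. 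Concretely, on a branch $u_0u_1u_2u_3$ of the rooted tree (with $u_{i+1}$ a child of $u_i$), take $\delta=(1,2,1,2)$ and $\alpha=(3,1,2,1)$: the constraints force $\gamma(u_1)=2$, $\gamma(u_2)=3$ and $\gamma(u_3)=3$, so $\gamma$ cannot be proper and stage one already fails. Hence the key sentence ``the required intermediate coloring exists because one of its endpoints is a $2$-coloring with enough slack'' is false. (Your middle stage, between two $2$-colorings, is just Lemma~\ref{lem:bipartite} and is not where the difficulty lies; note also that it uses a third color on an independent set, i.e., it is not itself a wave-pair through a proper $3$-coloring.)

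To close the gap you need a schedule less rigid than one in which every node changes exactly once per sweep at a depth-determined time. The paper's proof achieves this with an identification (contraction) operation: at each step, every current leaf that has a grandparent is recolored to its grandparent's current color --- always legal because a leaf has a single neighbor --- and is thereafter merged with that grandparent; the tree contracts to a single edge in $\cO(p)$ steps, the edge is then recolored to a fixed $2$-coloring in $\cO(1)$ steps, and each such step is mirrored back onto the (independent) sets of identified nodes. Running this from both $\alpha$ and $\beta$ to the same $2$-coloring and reversing the second half yields the $\cO(p)$ schedule. You would need either to adopt an argument of this kind, or to prove that a constant number of wave-pairs with \emph{existing} intermediate colorings always suffices; your current argument establishes neither.
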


\begin{proof}
Let $c\colon V \to [3]$ be a $3$-coloring of $T$. We introduce an identification operation: Given a leaf $u$ and a node $v$ such that $u$ and $v$ have a common neighbor $w$, we recolor $u$ with $c(v)$, and from then on we pretend that $u$ and $v$ are a single node. In other words, we delete $u$ from the tree we are considering, and reflect any recoloring of $v$ to the node $u$. Note that these operations can stack up: the recoloring of a single node might be reflected on an arbitrarily large independent set in the initial tree. 

We now briefly describe an algorithm to recolor a $3$-coloring into a $2$-coloring $c'$ in $O(p)$ rounds, with schedule $O(p)$. First, root $T$ on a node $r$ which is at distance at most $p$ of any node of $T$. Any node of $T$ which is not adjacent to the root has a \emph{grandparent}, which is defined as its parent's parent.

Then, at each step, we consider the set $A$ of leaves of $T$ which have a grandparent, if any. We identify each leaf in $A$ with its grandparent (note that the notion of grandparent guarantees that this operation is well-defined, and that the operation results in $A$ being deleted).

This process stops when $T$ consists only of the root $r$ and its children. We select one of the children arbitrarily and identify the others with it. This results in $T$ being a single edge. Note that the color partition of $c'$ is compatible with the identification operations, as we only ever identify nodes at even distance of each other.

We then recolor $T$ into $c'$: this is straightforward in the realm of $3$-recoloring.

We can now choose a $2$-coloring of $T$ (this can be done in $O(p)$ rounds), and apply the above algorithm to $3$-recolor both $\alpha$ and $\beta$ to that $2$-coloring. This results in a $3$-recoloring between $\alpha$ and $\beta$ with schedule $O(p)$.
\end{proof}

The same idea can be adapted to list coloring:

\begin{lem}\label{lem:recoltreelist}
For every tree $T$ with radius at most $p$, for any list assignment $L$ of at least $3$ colors to each node, for any two $L$-colorings $\alpha, \beta$ of $T$, we can compute in $O(p)$ rounds how to $L$-recolor $T$ from $\alpha$ to $\beta$ with schedule $O(p)$.
\end{lem}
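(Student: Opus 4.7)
The plan is to imitate the proof of Lemma~\ref{lem:recoltree} with the identification operation generalized to respect lists. First, root $T$ at a node $r$ of eccentricity at most $p$, computable in $O(p)$ rounds. Then iterate a rake-and-identify procedure from leaves toward $r$ for $O(p)$ rounds, each contributing $O(1)$ recoloring steps, until the tree collapses to a single edge; finally, recolor the surviving edge. Apply this to reduce $\alpha$ to a canonical $L$-coloring $c^*$, and reverse the analogous schedule for $\beta$; composition gives an $L$-recoloring from $\alpha$ to $\beta$ of length $O(p)$ in $O(p)$ rounds.

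The key modification concerns the identification step. In the non-list setting, we recolor a leaf $u$ to the color $\alpha(v)$ of its grandparent $v$ (through common neighbor $w$) and lock them together. In the list setting, $\alpha(v)$ may not lie in $L(u)$, so instead we fix an injection $\varphi_u \colon L(v) \to L(u)$, recolor $u$ to $\varphi_u(\alpha(v))$, and maintain the invariant that from then on $u$ always holds color $\varphi_u(\cdot)$ of $v$'s current color. To make the initial recoloring of $u$ legal we need $\varphi_u(\alpha(v)) \ne \alpha(w)$; if this fails we first park $u$ at an auxiliary color in $L(u)\setminus\{\alpha(w)\}$ (of size at least $2$ since $|L(u)|\ge 3$), or first recolor $v$ to a different color in $L(v)\setminus\{\alpha(\text{parent of }v)\}$, and then compose with $\varphi_u$. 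Concurrent leaf operations within a rake level are sequenced without conflict by reusing the three-color scheduling trick of Algorithm~\ref{algo:stabletreedecomposition}, applied to the auxiliary $3$-coloring of $T$ obtained once at the beginning.

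Once the tree has been contracted to a single edge by $O(p)$ rake rounds, the effective lists at its two endpoints still contain at least two colors each (chosen as a ``surviving pair'' throughout the identifications, see below), so any residual target configuration is reachable in $O(1)$ list-recoloring steps. Two copies of this procedure, one for $\alpha$ and one reversed for $\beta$, then compose into a schedule of the claimed length.

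The main obstacle is to control the \emph{effective list} at a surviving node as identifications accumulate: each identification of a leaf $u$ with $v$ restricts $v$'s future colors to the image of $\varphi_u$, which is contained in $L(u)$, and repeated intersections could shrink below size $2$ before the rake terminates. The plan is to exploit the bounded number of identifications per rake level, which follows from the light $(2\log n)$-labeling of Lemma~\ref{lem:treelabeling}, together with the $|L(v)|\ge 3$ assumption, to choose each injection $\varphi_u$ so that a prescribed pair of ``safe'' colors in $L(v)$ is preserved in the effective list at every step. Verifying this invariant along the rake is where the bulk of the technical work lies, but it reduces to a local case analysis at each leaf--grandparent pair, with the $\ge 3$ list-size hypothesis providing exactly the slack required.
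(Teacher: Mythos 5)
Your high-level strategy is the same as the paper's: generalize the leaf--grandparent identification of Lemma~\ref{lem:recoltree} so that an identified leaf $u$ mirrors the recolorings of its grandparent $v$ through their common neighbor $w$, collapse the tree to an edge in $O(p)$ rake rounds, and compose the schedule for $\alpha$ with the reversed schedule for $\beta$. However, the concrete mechanism you propose -- a fixed injection $\varphi_u\colon L(v)\to L(u)$ with $u$ always holding $\varphi_u$ of $v$'s current color -- creates exactly the difficulty that you then defer (''verifying this invariant along the rake is where the bulk of the technical work lies''), and that deferred step is the entire content of the lemma. The problem is not only the initial recoloring of $u$ (which your ''parking'' trick handles) but every subsequent recoloring of $v$: each time $v$ moves to a color $c$, you must have $\varphi_u(c)\ne$ the current color of $w$, and an arbitrary injection gives no such guarantee; your proposed fix via ''safe pairs'' and effective lists is not carried out, and it is unclear it can be, since several stacked identifications along a path all impose constraints simultaneously. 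A second gap: for the composition $\alpha\to c^*\leftarrow\beta$ to work, the final tree coloring determined by the surviving edge must be \emph{canonical}, i.e.\ independent of the initial coloring; your $\varphi_u$ and auxiliary parking colors are chosen adaptively based on $\alpha(w)$, which breaks this. (Also, invoking the light labeling of Lemma~\ref{lem:treelabeling} is off-target here; the relevant rake is the radius-based one of Lemma~\ref{lem:recoltree}.)

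The paper avoids all of this with a different definition of ''$u$ has the same color as $v$ with respect to $w$'': if $L(u)\ne L(w)$, park $u$ permanently at the smallest color of $L(u)\setminus L(w)$ (which can never conflict with $w$); if $L(u)=L(w)$ and $v$'s color lies in $L(u)$, copy $v$'s color (safe because $v$ is adjacent to $w$); otherwise take the smallest color of $L(w)$ differing from $w$'s current color. This makes legality at the edge $uw$ automatic in every case, imposes no restriction on $v$'s future moves (so no effective-list bookkeeping is needed), and is a function only of the list assignment, so the coloring of the collapsed edge canonically determines the coloring of the whole tree. You should replace your injection-based invariant with a rule of this kind; as written, the proposal has a genuine gap at its central step.
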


\begin{proof}
We adapt the identification operation introduced in the proof of  Lemma~\ref{lem:recoltree}, merely by adapting the notion of having the same color. Let $u$ and $v$ be two nodes with a common neighbor $w$. We say $u$ has the same color as $v$ \emph{with respect to $w$} in the following cases:
\begin{itemize}
    \item If $L(u)\neq L(w)$, then $u$ is colored with the smallest element of $L(u) \setminus L(w)$
    \item If $L(u)=L(w)$ and the color of $v$ belongs to $L(u)$, then $u$ is colored the same as $v$
    \item If $L(u)=L(w)$ and the color of $v$ does not belong to $L(u)$, then $u$ is colored with the smallest element of $L(w)$ that differs from the color of $w$
\end{itemize}

Therefore, when we identify a leaf $u$ with a node $v$ that has a common neighbor $w$ with $u$, we first assign to $u$ the same color as $v$ with respect to $w$, and from then on we pretend that $u$ and $v$ are a single node. In other words, any recoloring of $v$ is mirrored on $u$ so that at each step, the node $u$ has the same color as $v$ with respect to $w$. Note that in some cases it may be that the color of $u$ does not actually change when the color of $v$ does.

When the operations stack up, i.e. a node $u$ is identified with a node $v$ which is identified with a node $x$, we do not claim transitivity of the relation. In particular, $u$ and $x$ have no common neighbor, hence them having the same color is not well-defined. We merely enforce that $u$ has the same color as $v$ with respect to their common neighbor, and that $v$ has the same color as $w$ with respect to their common neighbor.

We insist on the fact that the definition of having the same color only depends on the list assignment. In particular, let us consider the situation once no more identification operation can be operated, i.e. the tree has been identified into an edge (see the proof of Lemma~\ref{lem:recoltree}). The coloring of the edge characterizes entirely the coloring of the whole tree, regardless of the initial coloring. Therefore, we can pick an arbitrary $L$-coloring of the edge, and recolor both $\alpha$ and $\beta$ into the corresponding $L$-coloring of the tree in $O(p)$ rounds with schedule $O(p)$.

This results in computing in $O(p)$ rounds an $L$-recoloring between $\alpha$ and $\beta$ with a schedule of length~$O(p)$.
\end{proof}

To prove Theorem~\ref{thm:tree4}, we first split the tree in small components. We slightly adapt the proof of Lemma~\ref{lem:uselightlabels}:

\begin{lem}\label{lem:uselightlabelsforlists}
For any tree $T$, any $3$-coloring $\alpha$ of $T$, and any integer $h$, let $L$ be a light $h$-label of $T$. There is a $O(h)$-round algorithm that finds a maximal independent set $S$ such that no node has two neighbors in $S$ and $T \setminus S$ only has connected components of radius $O(h)$.
\end{lem}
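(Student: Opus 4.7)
The strategy is to adapt Algorithm~\ref{algo:stabletreedecomposition} by strengthening the eligibility test used to insert a vertex into $S$. Specifically, a label-$i$ node $u$ may join $S$ only if, in addition to the usual independence condition $N(u)\cap S = \emptyset$, no neighbor of $u$ already has an $S$-neighbor, i.e.\ $N(v)\cap S = \emptyset$ for every $v\in N(u)$. This extra test examines only the $2$-neighborhood of $u$ and therefore costs $O(1)$ rounds per level. The outer structure is preserved: at each level $i=h, h-1,\ldots,1$ we first run a promotion phase on label-$i$ nodes that have a non-$S$ neighbor of label $\geq i+1$, then an extension phase that sweeps through the three color classes of $\alpha$, with the new test applied in both phases (and with a constant amount of symmetry breaking within each color class so that two candidates sharing a common neighbor do not both commit simultaneously). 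The whole algorithm therefore still runs in $O(h)$ rounds.

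The invariants that $S$ is independent and that no node ever acquires two $S$-neighbors are preserved by construction; moreover $S$ is maximal in the sense that no further vertex can be inserted without violating one of these two properties. The core of the proof is to bound the radius of the connected components of $T\setminus S$. I would argue this by induction on the level, showing that after levels $h, h-1,\ldots,i$ have been processed, every connected component of the subgraph induced by the non-$S$ vertices of label $\geq i$ has radius at most $C(h-i+1)$ for a universal constant $C$. The base case $i = h$ uses Definition~\ref{def:hlabel} to observe that the subgraph induced by label-$h$ vertices is a disjoint union of short paths, on which a distance-$3$ selection leaves components of radius at most one.

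For the inductive step, Definition~\ref{def:hlabel}(1) ensures that every label-$i$ node has at most one neighbor of strictly higher label and that the subgraph induced by label-$i$ vertices is again a disjoint union of paths. Hence, when moving from level $i+1$ down to $i$, the new label-$i$ paths are attached to the previously built components through isolated contact edges. Running the distance-$3$ extension along such a path, further constrained by the $S$-vertices that may already sit near a contact point, leaves gaps of length at most two in the interior and at most a bounded number of vertices outside of $S$ near each endpoint. Consequently, the radius of each component grows by only $O(1)$ when we go from level $i+1$ to level $i$, and taking $i = 1$ yields the announced $O(h)$ bound. The main obstacle is precisely this boundary analysis: one has to check that the distance-$3$ ``forbiddance'' propagated to a label-$i$ node by the existing $S$-vertices at its contact edge can block only a constant number of consecutive label-$i$ vertices from joining $S$, so that the induction closes with a universal constant $C$.
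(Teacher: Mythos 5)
Your proposal takes a genuinely different route from the paper, and the paper's route is worth knowing because it sidesteps your main difficulty entirely: instead of re-running Algorithm~\ref{algo:stabletreedecomposition} with a strengthened insertion test, the paper restricts attention to the set $R$ of nodes having \emph{no} neighbor of higher label (so $T[R]$ is a disjoint union of paths), chooses $S\subseteq R$ to be an independent set maximal subject to no node of $R$ having two neighbors in $S$, deduces the global two-neighbor condition from the light-labeling (a node outside $R$ cannot see two nodes of $R$), and obtains the radius bound from the fact that every node reaches $R$ along a label-increasing path of length at most $h$ while $T[R\setminus S]$ has components of size at most $2$.

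The genuine gap in your argument is the inductive step asserting that the radius grows by only $O(1)$ per level. That step implicitly assumes each surviving label-$i$ segment attaches to at most one previously built component, but your strengthened test breaks exactly the mechanism (the promotion phase) that guaranteed this in Lemma~\ref{lem:uselightlabels}. Concretely, take a label-$i$ path $v_1,\dots,v_m$ in which $v_3$ and $v_5$ have higher-label neighbors $z_3$ and $z_5$ that are not in $S$ but already have $S$-neighbors: then $v_3$ and $v_5$ are permanently barred from $S$ by your test, and once $v_2$ enters $S$ so is $v_4$ (its neighbor $v_3$ now has the $S$-neighbor $v_2$). The surviving segment $v_3,v_4,v_5$ is then a bridge joining the components of $z_3$ and of $z_5$; if these are distinct, the new component has radius on the order of the \emph{sum} of their radii, not the maximum plus $O(1)$. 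The recurrence your construction actually supports is $r_i \le 2\,r_{i+1}+O(1)$, which is exponential in $h$ rather than $O(h)$, and since merged components still contain nodes adjacent to $S$, the doubling can recur at every level. To close the proof you would have to show that such double attachments cannot occur (they can) or redesign the selection so that level $i$ is still severed from every surviving higher-label component --- which is essentially what the paper's restriction to $R$ achieves.
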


\begin{proof}
The algorithm is far simpler than Algorithm~\ref{algo:stabletreedecomposition}. We compute the set $R$ of nodes with no neighbor of higher label. We note that $G[R]$ is a collection of paths and cycles. We compute an independent set $S \subseteq R$ that is maximal subject to the property that no node in $R$ has two neighbors in $S$. Note that by definition of light label, no node outside of $R$ may have two neighbors in $R$ (hence in $S$). It remains to argue that $T \setminus S$ only has connected components of radius $O(h)$. We point out that every connected component of $T[R]$ contains an element of $S$. Therefore, any connected subset of nodes of $T[R]$ has at most one neighbor of higher label, since $T$ is a tree. Together with the fact that any connected component of $T[R \setminus S]$ has at most $2$ nodes, we derive the conclusion.
\end{proof}

Now we are ready to prove Theorem~\ref{thm:tree4}.

\begin{proof}[Proof of Theorem~\ref{thm:tree4}]
Compute (in $O(\log n)$ rounds) an independent set $S$ such any two elements of $S$ are at distance at least $2$ of each other and every connected component of $T \setminus S$ has radius $O(\log n)$. By Lemmas~\ref{lem:treelabeling} and~\ref{lem:uselightlabelsforlists} and the fact that a $3$-coloring of a tree can be computed in $O(\log n)$ rounds, we compute (in $O(\log n)$ rounds) an $L$-coloring $\gamma$ of $T \setminus S$ such that every node adjacent to an element $u \in S$ has a color different from $\alpha(u)$ and $\beta(u)$. Note that this coloring exists since any tree is $2$-list-colorable. Use Lemma~\ref{lem:recoltreelist} to recolor each connected component of $T \setminus S$ from $\alpha$ to $\gamma$. Recolor every element of $S$ with its color in $\beta$. Use Lemma~\ref{lem:recoltreelist} to recolor each connected component $T \setminus S$ from $\gamma$ to $\beta$. Note that this yields an $L$-recoloring of $T$ from $\alpha$ to $\beta$ with schedule $O(\log n)$.
\end{proof}

Note that a direct corollary of Theorem \ref{thm:tree4} is that for any $k-$coloring $\alpha$, $\beta$ of a trees with $k\ge4$, a schedule of length  $\Theta(\log n)$  can be found in $\Theta(\log n)$ rounds.

\section{Recoloring algorithm for subcubic graphs}\label{sec:subcubicpositive}

In this section we study recoloring in subcubic graphs (graphs of maximum degree at most $3$); our main result is summarized in the following theorem:

\begin{theorem}\label{thm:cubic3-1}
For every subcubic graph $G$ on $n$ nodes, for any two $3$-colorings $\alpha, \beta$ of $G$, we can compute in $O(\log^2 n)$ rounds how to recolor $G$ from $\alpha$ to $\beta$ with $1$ extra color and a schedule of length $O(\log n)$.
\end{theorem}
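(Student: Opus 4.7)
The plan is to follow the template of the tree case (Theorem~\ref{thm:tree3-1}): find an independent set $S \subseteq V(G)$ that can be ``parked'' in the extra colour $4$ and whose removal leaves a forest of small-radius trees, then apply the $3$-recolouring subroutine (Lemma~\ref{lem:recoltree}) to the remainder.  More precisely, I would reduce Theorem~\ref{thm:cubic3-1} to the following decomposition statement: \emph{in $O(\log^2 n)$ rounds one can compute an independent set $S \subseteq V(G)$ such that every connected component of $G[V \setminus S]$ is a tree of radius $O(\log n)$.}  Given such an $S$, the schedule is three-phased.  First, every $v \in S$ simultaneously switches to colour~$4$; this is legal because $S$ is independent and $\alpha(V) \subseteq \{1,2,3\}$, so no conflict with colour $4$ can arise.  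Second, in parallel on each tree component $T$ of $G - S$, apply Lemma~\ref{lem:recoltree} to recolour $T$ from $\alpha|_T$ to $\beta|_T$ using only colours in $\{1,2,3\}$; the neighbours of $T$ in $S$ currently carry colour $4$, which is disjoint from that palette, so the tree schedule respects the boundary automatically.  By the radius bound, this costs $O(\log n)$ rounds and schedule length $O(\log n)$.  Third, every $v \in S$ switches simultaneously to $\beta(v)$; again this is legal by independence and because each neighbour of $v$ in $G - S$ has already reached its $\beta$-colour.

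The main obstacle is the construction of $S$.  I would generalise the rake-and-compress machinery of Lemmas~\ref{lem:treelabeling} and~\ref{lem:uselightlabelsforlists} from trees to subcubic graphs.  The geometric observation driving the generalisation is that a subcubic graph is ``almost'' a tree: every cycle can be broken by deleting a single vertex, and every vertex lies on only $O(1)$ short cycles of any BFS structure.  The iterative peeling at each layer removes (i) all vertices of degree at most~$1$ in the current remainder, (ii) interior vertices of long chains of degree-$2$ vertices, as in Lemma~\ref{lem:treelabeling}, and (iii) one representative vertex from each short cycle currently present in the remainder.  All vertices peeled in the same layer receive the same label.  A constant-factor shrinkage argument analogous to the $|W| + |V_1| \ge |V|/6$ computation in the proof of Lemma~\ref{lem:treelabeling} shows that $O(\log n)$ layers suffice.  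From the resulting labelling, $S$ is extracted greedily from the highest layer down by an MIS-style step at each layer, as in Lemma~\ref{lem:uselightlabelsforlists}, so that $S$ is independent and the remaining trees in $G - S$ have radius $O(\log n)$.

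The hardest piece to make rigorous is the cycle-breaking step and its interaction with the classical rake/compress operations.  Subcubicity is essential here: a vertex chosen to break a cycle has at most two other neighbours, which limits how far its admission to $S$ can propagate, so the local conflict graph among simultaneous cycle-breakers has bounded degree.  Consequently the MIS subroutine needed inside each layer runs in $O(\log n)$ rounds deterministically on a constant-degree instance, and over $O(\log n)$ layers this totals $O(\log^2 n)$ rounds, matching the statement.  Once the decomposition is built, the three macro-steps above execute in a schedule of total length $O(\log n)$, completing the proof.
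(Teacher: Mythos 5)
Your macro-level plan is exactly the paper's: compute an independent set $S$ whose removal leaves a forest of radius $O(\log n)$, park $S$ on colour $4$, recolour the forest by the tree subroutine (the paper cites Lemma~\ref{lem:recoltreelist}, you cite Lemma~\ref{lem:recoltree}; either is fine here since the $S$-neighbours sit on the disjoint colour~$4$), then move $S$ to $\beta$. The schedule-length and round-count bookkeeping for these three phases is correct. So the reduction to the decomposition claim, which is exactly the paper's Lemma~\ref{lem:algostableforestdecompositionrunningtime}, is fine.

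The gap is in your construction of $S$. You propose to generalise rake-and-compress by also ``removing one representative vertex from each short cycle,'' arguing constant-fraction shrinkage per layer. This fails already on a $3$-regular graph of girth $\Theta(\log n)$ (e.g.\ a cubic expander): there are no degree-$\le 1$ vertices, no degree-$2$ chains, and no short cycles in any constant-radius neighbourhood, so steps (i)--(iii) delete nothing and the peeling makes no progress. And if ``short'' is stretched to length $\Theta(\log n)$, a $\Theta(\log n)$-ball in a cubic graph may contain polynomially many such cycles that are not locally enumerable, so ``one representative per cycle'' is neither well defined nor an MIS-on-bounded-degree subproblem, and the $|W|+|V_1| \ge |V|/6$-style shrinkage estimate has no analogue. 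The paper sidesteps all of this by a structural fact you do not have: Lemma~\ref{lem:thereisthetaor2} shows that every ball $B^{2\log n}(u)$ in a subcubic graph contains either a degree-$\le 2$ node or a \emph{theta} (two vertices joined by three internally disjoint paths) -- a theta being the right local cycle-breaking gadget, not an arbitrary short cycle. The paper then takes a $(4\log n, 8\log n)$-ruling set of such local configurations, grows the $(S,F)$-partition outward in $O(\log n)$ distance layers from these landmarks (each layer is a union of paths/cycles, handled greedily), and finally patches in the landmarks themselves via Lemma~\ref{lem:decompositionextension}; the $O(\log^2 n)$ comes from the ruling-set computation, not from an MIS-per-rake-layer as in your sketch. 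To close your gap you would essentially need to prove something equivalent to the theta lemma and replace the layer-by-label peeling with a layer-by-distance growth from a ruling set.
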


A \emph{theta} is formed of three node-disjoint paths between two nodes. Note that in particular if a graph contains two cycles sharing at least one edge, then it contains a theta. We note $B^{k}(u)$ the set of nodes at distance at most $k$ to $u$.

We show here, roughly, that there is around every node a nice structure that we can use to design a valid greedy algorithm for the whole graph. This proof is loosely inspired by one in~\cite{ABBE}.

\begin{lem}\label{lem:thereisthetaor2}
For every subcubic graph $G$ on $n$ nodes, for every node $u \in V(G)$, there is a node $v$ with degree at most $2$ or a theta that is contained in $B^{2 \log n}(u)$.
\end{lem}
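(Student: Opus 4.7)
The plan is to argue by contradiction. Suppose that $B^{2\log n}(u)$ contains neither a vertex of degree at most $2$ nor a theta as a subgraph. Then every vertex of $B^{2\log n}(u)$ has $G$-degree exactly $3$, and the induced subgraph $H := G[B^{2\log n}(u)]$ is theta-free. A connected theta-free graph is a \emph{cactus}: every $2$-connected subgraph is a single cycle, so each block of $H$ is either a bridge or a cycle. In a cactus one has $\deg(v) = (\text{\# bridges at } v) + 2\cdot(\text{\# cycle-blocks at } v)$, so every vertex of degree $3$ lies either in three bridges or in exactly one cycle plus one bridge; in both cases $v$ is a cut-vertex. Since every vertex at distance strictly less than $2\log n$ from $u$ has all its $G$-neighbors inside $H$ and therefore $\deg_H = 3$, every interior vertex of $H$ is a cut-vertex of $H$, and every non-cut-vertex of $H$ necessarily lies on the boundary $L_{2\log n}$.

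Next I would show $|V(H)| > n$, which contradicts $|V(H)| \le |V(G)| = n$. The idea is to root the block-cut tree $T_B$ of $H$ at $u$ and bound the ball size from below by induction on the radius. An interior cut-vertex sits either in (i) three bridges, opening three child sub-cacti at distance $+1$, or (ii) one bridge plus one cycle of length $k$, contributing one child sub-cactus at distance $+1$ plus $k-1$ additional cut-vertices along the cycle at distances $1, 2, \ldots, \lfloor k/2\rfloor$, each starting its own sub-cactus. Writing down the resulting recursion for $f(r)$, the minimum vertex count of a cactus of radius $r$ meeting our hypotheses, a coordinate-wise comparison shows that the minimum is realized by the \emph{triangle-chain} cactus, in which every block is a triangle and every cut-vertex is incident to exactly one triangle and one bridge. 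A direct BFS in that case gives $n_0 = 1$, $n_{2k+1} = 3 \cdot 2^{k}$, and $n_{2k} = 2^{k+1}$ for $k \ge 1$, and summing yields $|V(H)| = 7 \cdot 2^{r/2} - 6$; applied with $r = 2\log n$ this gives $|V(H)| \ge 7n - 6 > n$ for $n \ge 2$, the desired contradiction.

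The main obstacle is verifying rigorously that the triangle-chain really is extremal. Option~(i) above (three bridges at a cut-vertex) triggers a full binary-tree-like expansion of growth rate $2^{r}$, which dominates $2^{r/2}$; and replacing the triangle in option~(ii) by a longer cycle inserts more vertices per unit of distance travelled, since a cycle of length $k$ uses $k-1$ additional cut-vertices but advances the BFS distance by only $\lfloor k/2\rfloor$. Carrying out the coordinate-wise minimization of the recursion at every node of $T_B$ pins the minimum at the triangle-chain configuration and completes the proof.
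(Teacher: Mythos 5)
Your proof is correct but follows a genuinely different route from the paper's. The paper also starts from the observation that theta-freeness plus cubicity makes the cycles in the ball node-disjoint, but then it \emph{deletes} one carefully chosen edge per cycle (an edge whose endpoints are farthest from $u$) to obtain a spanning tree of the ball in which every node keeps degree at least $2$ and no two adjacent nodes both have degree $2$; suppressing the degree-$2$ nodes then yields a $3$-regular tree of depth at least $\log n$, hence more than $n$ nodes. You instead keep the cactus structure intact and do an extremal BFS count over the block--cut tree. Both arguments bottom out at the same growth rate $2^{r/2}$, realized by the alternating triangle/bridge chain; your version makes that extremal example explicit and even gives the sharp constant $7\cdot 2^{r/2}-6$, at the price of the coordinate-wise minimization you flag. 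That minimization does close as you sketch: writing $F(r)$ (resp.\ $G(r)$) for the minimum count below a vertex with two (resp.\ one) remaining edges, one checks $G(r)=1+F(r-1)$, that the two-bridge option $1+2F(r-1)$ exceeds the triangle option $3+2F(r-2)$ because $F(r-1)>F(r-2)+1$, and that a $k$-cycle with $k\ge 4$ contributes the triangle's vertices plus extra ones at every distance, so the triangle chain is the pointwise minimizer. One bookkeeping point to fix in the write-up: your case (i) ``three bridges'' applies only at the root $u$; a non-root cut-vertex entered via a bridge has residual degree $2$ (two further bridges or one cycle), and one entered via a cycle has residual degree $1$ (a single further bridge). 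Your BFS numbers $n_0=1$, $n_{2k+1}=3\cdot 2^k$, $n_{2k}=2^{k+1}$ already reflect the correct accounting, so this is purely presentational.
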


\begin{proof}
Assume for a contradiction that there is a subcubic graph $G$ on $n$ nodes with a node $u$ such that $B^{2 \log n}(u)$ contains no node of degree $2$ nor any theta. Let $B$ be the set of nodes at distance at most $2 \log n$ from $u$, and $B^-$ the set of nodes at distance at most $2 \log n-1$ from $u$. Let $\mathcal{C}$ be the set of cycles of $G$ contained in $B$. Note that cycles in $\mathcal{C}$ are edge-disjoint by assumption on $u$ and thus node-disjoint since $G$ is cubic. We select a set $\mathcal{E}$ by picking for every $C \in \mathcal{C}$ an arbitrary edge in $E(C)$ among those with both endpoints farthest from $u$. Note that $|\mathcal{E}|=|\mathcal{C}|$, and that by choice of $\mathcal{E}$, every edge in $B$ with both endpoints at the same distance of $u$ is selected in~$\mathcal{E}$. Therefore, the distance to $u$ yields a natural orientation of the edges in $B \setminus \mathcal{E}$, orientation from closer node to $u$ toward further node. We also note that by choice of $\mathcal{E}$, for any edge $wx$ in $\mathcal{E}$ such that $x$ is farther away from $u$ than $w$, the node $x$ has another neighbor $y$ at the same distance of $u$ as $w$. In that case, note that the edge $xy$ does not belong to $\mathcal{E}$.
We claim as a consequence that the distance from $u$ is the same in $B$ as in $B \setminus \mathcal{E}$.

For any node $w \in B$, we say an outgoing edge is \emph{useful} if it does not belong to $\mathcal{E}$.
In addition to the above remarks, we make two observations:
\begin{enumerate}
    \item Every node in $B^-$ has at least one useful edge.
    \item If a node $w$ in $B^-$ has only one useful edge $wx$, then $x$ has two outgoing useful edges.
\end{enumerate}

Let us consider the graph $H$ obtained from $G[B]$ by removing all edges in $\mathcal{E}$. We claim that every node in $B$ has degree at least $2$ in $H$, and that no two adjacent nodes in $H$ have degree $2$: this is immediate from the observations and remarks above. We also observe that $H$ is a tree. Let $H'$ be the graph obtained from $H$ by replacing every node of degree two with an edge. We note that $H'$ is a $3$-regular tree of root $u$ and with no leaf at distance less than $\log n$ of $u$. It follows that $H'$ contains at least $1+3 \cdot 2^{\log n} > n$ nodes, a contradiction.
\end{proof}

\begin{lem}\label{lem:decompositionextension}
Let $G$ be a subcubic graph, let $p$ be an integer, and let $\mathcal{A}$ be a collection of thetas and nodes of degree $\leq 2$ in $G$ each at distance at least $2$ of each other. Let $r \geq 1$ be such that no element of $\mathcal{A}$ has diameter more than $\frac{r}2$. If the nodes of $G \setminus (\bigcup_{A \in \mathcal{A}} A)$ can be partitioned into $S$ and $F$ such that $G[S]$ is an independent set and $G[F]$ is a forest of radius at most $p$, then there is a partition $(S',F')$ of $\bigcup_{A \in \mathcal{A}} A$ such that $G[S \cup S']$ is an independent set and $G[F \cup F']$ is a forest of radius at most $p+r$.
\end{lem}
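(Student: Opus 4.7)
My plan is to handle each element $A\in\mathcal{A}$ independently. This is justified by the distance-$\ge 2$ hypothesis: any vertex in one $A$ is non-adjacent to any vertex in another $A'$, and the external neighbours of an $A$-vertex lie in $S\cup F$ rather than in another element of $\mathcal{A}$. The global partition $(S',F')$ is then the concatenation of local partitions $(S'_A,F'_A)$ of the vertex set of each individual $A$.

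For the single-vertex case $A=\{u\}$ with $\deg(u)\le 2$, I would place $u$ in $F'$ by default and move it to $S'$ only if doing so would create a cycle in $G[F\cup F']$. A cycle can appear only when both neighbours of $u$ lie in the same tree of $F$; in that case $u$ has no $S$-neighbour at all, so putting $u$ in $S'$ is compatible with the independence of $S\cup S'$.

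For a theta $A$ with branch vertices $x,y$ and internally disjoint $x$-$y$ paths $P_1,P_2,P_3$, my central observation is that each of $x,y$ already has all three of its $G$-neighbours inside $A$ (since $G$ is subcubic), so neither has a neighbour in $S$, and $\{x\}$ alone is a feedback vertex set of $G[A]$: removing $x$ leaves three arms pending from $y$, forming a tree. I would initialise $S'_A:=\{x\}$; this destroys all three internal theta-cycles. If cycles in $G[F\cup F']$ still remain, each must use at least two external edges from $A$ into a common tree of $F$, and its $A$-portion contains some internal-path vertex $v$ whose external neighbour lies in $F$ (not in $S$); such a $v$ has no $S$-neighbour and can be moved into $S'_A$ provided it is not adjacent to $x$. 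In the corner case where $P_3$ is a single edge (so $x,y$ are adjacent and $P_1,P_2$ each have length~$2$), this becomes impossible, and I would instead use the alternate feedback vertex set $\{a,b\}$ with $a\in V(P_1)\setminus\{x,y\}$ and $b\in V(P_2)\setminus\{x,y\}$, which exists because at most one of $P_1,P_2,P_3$ is a single edge in a simple graph.

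Independence of $S\cup S'$ then follows from the pairwise non-adjacency of the chosen vertices within each $A$ (internal vertices on different paths are non-adjacent since the paths are internally disjoint), the absence of $S$-neighbours of the chosen vertices by construction, and the distance-$\ge 2$ separation between different $A$'s. Acyclicity of $G[F\cup F']$ is ensured by destroying both internal theta-cycles and external cycles as above. The main obstacle I expect is the radius bound $p+r$: the $F'$-portion of $A$ may merge several old trees of $F$ into one, and a naive centre choice could push the new eccentricity up to roughly $2p+\diam(A)/2$. The resolution I foresee is to pick the new centre inside $A$, using $\diam(A)\le r/2$ together with a rebalancing argument so that every attached old tree $T$ contributes at most $p$ additional eccentricity; formalising this—in particular when an attachment point of $T$ lies far from the centre of $T$—is the most delicate step of the proof.
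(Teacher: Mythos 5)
Your high-level plan—handling each element of $\mathcal{A}$ independently, justified by the distance-$\ge 2$ separation—matches the paper, and the corner-case handling of a theta with one single-edge path is correct. However, the radius bound is not just ``the most delicate step'': it is the place where the proposal actually fails, and your re-centering idea cannot repair it. This is already visible in the degree-$\le 2$ case. Your rule keeps $u$ in $F'$ by default and moves it to $S'$ only if keeping it would close a cycle, i.e.\ only when both $F$-neighbours of $u$ lie in the \emph{same} tree of $G[F]$. But if $u$ has two $F$-neighbours in \emph{different} trees $T_1, T_2$, no cycle appears, $u$ stays in $F'$, and the two trees are merged. An attachment point in $T_i$ can have eccentricity up to $2p$ in $T_i$, so the merged tree has radius roughly $2p+1$, which exceeds $p+r$ whenever $r\le p$; re-centering does not help, since from any candidate center one of the two arms still has length about $2p$. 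The paper uses the opposite default: $u\in S'$ unless $u$ has a neighbour in $S$ (in which case $u\in F'$, but then it has at most one $F$-neighbour, so it is a pendant leaf and the radius grows by at most $1\le r$).

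The theta case suffers from the same defect in amplified form. Putting one branch vertex in $S'$ (plus ad hoc interior vertices to kill remaining cycles) destroys cycles but places no control on how many trees of $G[F]$ get merged through the $F'$-portion of the theta; if two of them are merged, a center inside $A$ gives eccentricity about $\diam(A)+1+2p$, far beyond $p+r$. The paper's key invariant, which your proposal is missing, is that \emph{no two trees of $G[F]$ are ever merged}: along each path $P_i$ the $S'$-breakpoints are chosen by induction so that (a) the $F'$-segment containing the $F'$-branch vertex $v$ has no external $F$-neighbour at all—its whole component in $G[F\cup F']$ lies inside $F'$—and (b) every other $F'$-segment attaches to at most one tree of $G[F]$. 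Each original tree of radius $\le p$ then only gains a pendant chain of diameter $\le r/2$, giving radius $\le p+1+r/2\le p+r$. Your sketch also leaves open whether the interior vertices added to $S'$ to kill several coexisting external cycles can always be chosen pairwise non-adjacent and non-adjacent to $x$; the paper's path-by-path induction takes care of this as well.
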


\begin{proof}
Our construction ensures that any pair of nodes that are not connected in $G[F]$ are not connected in $G[F \cup F']$ neither. Hence, it suffices to prove that the statement holds for a single element of $\mathcal{A}$, since the elements of $\mathcal{A}$ are by hypothesis non-adjacent. Let $A$ be an element of  $\mathcal{A}$. We consider two cases depending on whether $A$ is a node of degree at most $2$ or is a theta.

\begin{itemize}
    \item If $A$ consists of a node $v$ of degree $1$, or $2$, we set $v$ to be in $F'$ if it has a neighbor in $S$, in $S'$ otherwise. Note that since $v$ has at most one neighbor in $F$, the radius of $F \cup F'$ is at most one more than that of $F$.
    \item If $A$ consists of a theta with endpoints $u$ and $v$ and three node-disjoint paths $P_1, P_2, P_3$, we prove independently that each $P_i$ admits a partition that is compatible with $u$ being set to $S'$ and $v$ to $F'$, in such a way that the connected component of $F \cup F'$ that contains $v$ is contained in $F'$. We do this by induction on the number of nodes in $P_i$. If $P_i$ has no internal node, the conclusion immediately follows. If all the neighbors of $P_i$ at distance $\geq 3$ of $u$ through $P_i$ are in $S$, we set all of $P_i$ to $F'$. Otherwise, let $w$ be the neighbor of $P_i$ in $F$ that with smallest distance ($\geq 3)$ to $u$ through $P_i$. Let $x$ be the neighbor of $w$ in $P_i$. We apply induction on $P_i \setminus \{\textrm{nodes closer to $u$ than $x$}\}$, with $x$ in the role of $u$. Note that $x$ is distinct from $v$ and not adjacent to $u$, by construction. The nodes between $u$ and $x$ on $P_i$ are added to $F'$. Note that these nodes are connected to at most one component of $G[F]$, on the first node of $P_i$.
    We extend the resulting decomposition to the rest of $P_i$ by setting all corresponding nodes to $F'$. \qedhere
\end{itemize}
\end{proof}

\begin{lem}\label{lem:algostableforestdecompositionrunningtime}
Let $G$ be a subcubic graph on $n$ nodes. We can compute in $O(\log^2 n)$ rounds a partition $(S,F)$ of the nodes of $G$ that $G[S]$ is an independent set and $G[F]$ is a forest of radius $O(\log n)$.
\end{lem}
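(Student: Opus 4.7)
The plan is to iteratively peel off thetas and low-degree nodes from $G$, using Lemma~\ref{lem:thereisthetaor2} to guarantee local existence of such structures and Lemma~\ref{lem:decompositionextension} to build up the partition in reverse. Concretely, I would define a descending sequence of subgraphs $G = G_0 \supseteq G_1 \supseteq \cdots \supseteq G_T$ together with collections $\mathcal{A}_0, \mathcal{A}_1, \ldots, \mathcal{A}_{T-1}$. At iteration $i$, Lemma~\ref{lem:thereisthetaor2} guarantees that within every ball of radius $2\log|V(G_i)|$ in $G_i$ there is a theta or a node of degree at most $2$. I would have each node locally identify such a witness structure in its ball, and then compute a collection $\mathcal{A}_i$ of these structures that are pairwise at distance at least $2$ in $G_i$, by running a standard Luby-style MIS procedure on a suitable auxiliary graph; this takes $O(\log n)$ rounds per iteration. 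Setting $G_{i+1} = G_i \setminus \bigcup_{A \in \mathcal{A}_i} A$ removes (via the closed neighborhoods covered by the witnesses) a constant fraction of the nodes, so after $T = O(\log n)$ iterations the graph is exhausted.

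Once the layers are fixed, I would assemble the partition by applying Lemma~\ref{lem:decompositionextension} in reverse, starting from the trivial partition $(\emptyset,\emptyset)$ of the empty graph $G_T$ and extending from $G_{i+1}$ to $G_i$ using $\mathcal{A}_i$, for $i = T-1, T-2, \ldots, 0$. Each extension preserves the invariant that $S$ is independent in $G$ and $F$ induces a forest. Crucially, no further communication is needed in the construction phase: every node knows the layer to which it belongs, together with the structural data of its witness, and can therefore decide locally whether to be placed into $S$ or $F$ according to the deterministic rule given in the proof of Lemma~\ref{lem:decompositionextension}. The total round complexity is then $T \cdot O(\log n) = O(\log^2 n)$, as claimed.

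The main obstacle is the radius bound. A naive summation of per-step growth using $r_i = O(\log|V(G_i)|)$ and geometric shrinkage of $|V(G_i)|$ gives only $O(\log^2 n)$, which is too weak. To obtain the sharper $O(\log n)$ bound I would argue that trees in $F$ do not actually compound across all layers. The natural route is to enforce a stronger separation in the peeling: when selecting $\mathcal{A}_i$, use a ruling-set variant that imposes pairwise distance at least some constant $c$ in $G$ across layers as well, so that the pieces of the forest contributed by different iterations attach to $F$ essentially as independent subtrees rather than stacking radially. A second ingredient is that inside the extension step, the nodes from the theta that lie on its longest branches can be placed into $S'$ (this is permitted since thetas have three internally disjoint paths, so at most one per path must be sacrificed), so that $F'$ contributes a subtree of bounded radius per witness. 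Showing that these two refinements can be carried out simultaneously with the MIS-based peeling, while keeping the total number of rounds at $O(\log^2 n)$, is where the technical work of the proof should concentrate; the bookkeeping then yields a forest whose maximal tree has radius $O(\log n)$.
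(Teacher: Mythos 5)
Your route is genuinely different from the paper's, and it has a gap you yourself flag but do not close. The paper does \emph{not} peel iteratively. Instead it picks a single collection $\mathcal{A}$ of witnesses in one shot: every node locally selects a theta or a node of degree $\le 2$ in its ball $B^{2\log n}(u)$ via Lemma~\ref{lem:thereisthetaor2}, then a $(4\log n,8\log n)$-ruling set is computed (this is where the $O(\log^2 n)$ rounds come from, via~\cite{panconesi1992improved}), and $\mathcal{A}$ is the set of witnesses attached to ruling-set nodes. The partition of $G\setminus\bigcup\mathcal{A}$ is then built by processing the distance layers from $\mathcal{A}$ from the farthest (distance $8\log n$) inward to distance $1$: each layer induces paths and cycles in a subcubic graph, an MIS per layer goes to $S$ and the $O(1)$-size leftovers go to $F$, so the forest pieces never span more than $O(\log n)$ layers. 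Finally Lemma~\ref{lem:decompositionextension} is invoked \emph{once} to absorb $\mathcal{A}$, costing a single additive $O(\log n)$ in radius. One ruling set, one application of the extension lemma, radius $O(\log n)$ falls out immediately.

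Your iterative version instead applies Lemma~\ref{lem:decompositionextension} across $T=O(\log n)$ layers, and as you correctly note this compounds to a forest of radius $O(\log^2 n)$, not $O(\log n)$. The two refinements you sketch do not obviously rescue this. Enforcing extra separation between witnesses of different layers does not address the compounding: the radius growth of each extension step comes from the $F'$ piece of a single theta, not from interference between witnesses. And placing ``the long branches of the theta into $S'$'' cannot be done: $S\cup S'$ must be an independent set, so along any branch of the theta at least every other node must land in $F'$, meaning the $F'$ part of a theta necessarily has radius $\Theta(\text{diameter of the theta})$, which is $\Theta(\log|V(G_i)|)$ at layer $i$. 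This is exactly the lower-order term you were trying to eliminate. (Lemma~\ref{lem:decompositionextension} itself already formalizes this: the radius goes up by $r$ where $r/2$ bounds the diameter of the structures, and you cannot hope for better without a different construction.) A secondary unjustified step is the claim that peeling a maximal set of pairwise-far witnesses removes a constant fraction of $G_i$: each witness only has $O(\log n)$ nodes, and Lemma~\ref{lem:thereisthetaor2} only guarantees one witness per ball of radius $2\log n$, so this needs an argument. If you want to salvage the iterative idea, you would need to show that the forest components contributed at different layers attach to disjoint parts of $F$, which seems hard to guarantee; it is cleaner to follow the paper's one-shot ruling-set construction.
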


\begin{proof}
To that purpose, we combine the previous lemmas in Algorithm~\ref{algo:stableforestdecomposition}.
The algorithm computes a decomposition as desired and runs in $O(\log n)+RS(n)$ rounds, where $RS(n)$ is the number of rounds necessary to compute a $(4 \log n, 8 \log n)$-ruling set in a subcubic graph. We derive from~\cite{panconesi1992improved} that $RS(n) = O(\log^2 (n))$, hence the conclusion.
\end{proof}

\begin{algorithm}
\caption{\label{algo:stableforestdecomposition}\textsc{Decomposing into a small forest and an independent set}}
\begin{algorithmic}[1] 
\REQUIRE A subcubic graph $G$.
\ENSURE A decomposition $(F,S)$ of $V(G)$ such that $G[S]$ is an independent set and every connected component of $G[F]$ has radius at most $\log n$.
\FOR{$u$ in $V(G)$ (in parallel)}
\STATE Acquire knowledge on $B^{2 \log n}(u)$
\STATE Select in the node set of $B^{2 \log n}(u)$ a configuration $C(u)$ that is a minimal theta or a node of degree $1$ or $2$
\ENDFOR
\STATE Compute a $(4 \log n, 8 \log n)$-ruling set $X$ in $G$
\STATE Define $\mathcal{A}=\cup_{u \in X}\{C(u)\}$
\STATE Compute the distance of every node in $G$ to an element of $\mathcal{A}$
\STATE Let $F=S=\emptyset$
\FOR{$i=8 \log n$ downto $1$}
\STATE Extend the partition $(F,S)$ to the nodes at distance $i$ from $\mathcal{A}$, more precisely:
\STATE Each connected component is a path or cycle where no internal node has an already assigned neighbor, let $U_i$ be the set of the internal nodes
\STATE Assuming a pre-computed MIS on each layer for the sets $U_i$, assign that MIS to $S$
\STATE Extend greedily on the remaining nodes (which form bounded-size components), assigning nodes to $S$ when possible, to $F$ when not
\ENDFOR
\STATE Extend the partition $(F,S)$ to the nodes belonging to an element of $\mathcal{A}$ using Lemma~\ref{lem:decompositionextension}
\end{algorithmic}
\end{algorithm}

We are now ready to prove Theorem \ref{thm:cubic3-1}, which we do in a similar fashion as Theorem \ref{thm:tree3-1}.

\begin{proof}
Use Lemma \ref{lem:algostableforestdecompositionrunningtime}, and obtain a decomposition $(S,F)$ as stated. Recolor all of $S$ to the extra color, then use Lemma \ref{lem:recoltreelist} on each connected component of $G[F]$ so that all nodes of $F$ reach their target color (remember that each connected component of $G[F]$ has radius $O(\log n)$). Finally recolor each node of $S$ with its target color.
\end{proof}

\section{Recoloring in toroidal grids}\label{sec:grids}

In this section we study toroidal grids (torus grid graphs). Throughout this section, an $h \times w$ toroidal grid is the Cartesian graph product of cycles of lengths $h$ and $w$; we assume $h \ge 3$ and $w \ge 3$. A toroidal grid can be constructed from an $h \times w$ grid by wrapping both boundaries around into a torus. In the full version, we show that e.g.\ $2+0$, $3+0$, and $4+0$ recoloring is not always possible, and by Lemma~\ref{lem:minusone} e.g.\ $2+1$, $3+2$, and $4+3$ recoloring is trivial. The first nontrivial case is $3+1$ recoloring; in this section we give a complete characterization of $3+1$ recolorability in toroidal grids:

\begin{theorem}\label{thm:grids}
Let $G$ be the $h \times w$ toroidal grid graph. Then $3+1$ recoloring is possible for any source and target coloring in the following cases:
(i)~both $h$ and $w$ are even, or
(ii)~$h = 4$, or
(iii)~$w = 4$.
For all other cases it is possible to construct $3$-colorings $s$ and $t$ such that $t$ is not reachable from $s$ by valid recoloring operations using $1$ extra color.
\end{theorem}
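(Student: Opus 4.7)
\medskip
\noindent\textbf{Plan.} I attack the two directions of the equivalence separately.

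\medskip
\noindent\textbf{Positive direction.} In case~(i), when $h$ and $w$ are both even, the toroidal grid is bipartite, so Lemma~\ref{lem:bipartite} immediately yields a schedule of length $3$ with one extra color. The substantive cases are~(ii) and~(iii); by symmetry I focus on $h=4$. Although the whole graph is not bipartite when $w$ is odd, every column is a $4$-cycle (hence bipartite), and with only four rows a column admits only a constant number of valid $3$-coloring patterns. My plan is to design an explicit constant-length schedule that brings both $s$ and $t$ to a common canonical ``column-wise'' form, using the extra color as slack. Concretely I would process columns in two passes: first move the nodes in odd-indexed columns to a canonical pattern by staging rows $0,2$ (an independent set in those columns) through color $4$ and back, then use the resulting freedom in odd-indexed columns to recolor even-indexed columns; finally reverse the process to reach $t$. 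The argument reduces to a small case analysis on column patterns, which is finite because $h=4$.

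\medskip
\noindent\textbf{Negative direction.} For the remaining cases (both $h,w$ odd, or exactly one of them odd with the even one different from $4$), my plan is to exhibit an invariant of proper $3$-colorings that is preserved by every $3+1$ recoloring sequence, and then to produce $s,t$ with distinct invariants. The invariant I would use is the pair of \emph{winding numbers} $(W_h,W_v)$: in any $3$-coloring, orient each edge by the cyclic rule $1\to 2\to 3\to 1$, and let $W_h$ be the signed edge count around any row and $W_v$ around any column. Standard discrete-cocycle reasoning shows that $W_h$ and $W_v$ are well-defined integers, divisible by $3$, with $|W_h|\le w$, $|W_v|\le h$, and matching parities. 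Single-node $3$-coloring moves preserve $(W_h,W_v)$; the core claim is that passing through the extra color $4$ changes $(W_h,W_v)$ only in a controlled way, so that the residue class of $(W_h,W_v)$ modulo a carefully chosen integer $d(h,w)$ is an invariant of the entire schedule. In the failing cases this residue takes more than one value, and I would realize two distinct classes by, for instance, the diagonal coloring $c_{i,j}=(i+j)\bmod 3$ versus the anti-diagonal $c_{i,j}=(i-j)\bmod 3$, or (when one side is even and different from $4$) variants built from horizontal $3$-periodic stripes shifted against vertical ones.

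\medskip
\noindent\textbf{Main obstacle.} The hardest part is proving the winding invariant is genuinely preserved under $3+1$ recoloring, and proving the dependence on $h,w$ that is needed to explain the $h=4$ exception. I would introduce an extended signed winding assigning each edge an element of a finite abelian group depending on its two endpoint colors (including~$4$), and verify that the sum around any row is invariant under every legal step (using the independent-set and properness constraints on a step to kill the local contribution of each recolored node). The delicate point is pinning down why $h=4$ is exceptional: the parity and divisibility constraints force $W_v\in\{0\}$ when $h=4$, which trivializes the vertical obstruction and allows all horizontal-winding classes to merge through the extra color; whereas for $h=6$ (still even, not $4$) with $w$ odd the available $W_v$ values reintroduce an obstruction. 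Making this distinction emerge cleanly from the invariant is the subtlest combinatorial step of the proof.
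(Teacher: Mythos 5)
Your negative direction has a genuine gap: you never actually construct the invariant, and the one you name does not work as stated. The row/column winding numbers $(W_h,W_v)$ of a $3$-coloring are indeed preserved by single-node $3$-recoloring moves, but they are \emph{not} preserved once color $4$ is allowed -- a single cycle is always $3+1$-recolorable (Lemma~\ref{lem:3plus1paths}), so every winding class of a row can be reached from every other. Your ``core claim'' that passage through color $4$ changes $(W_h,W_v)$ only in a controlled way, modulo some unspecified $d(h,w)$, is precisely the entire difficulty, and asserting that a suitable group-valued edge weight ``would'' exist is not a proof; there is no a priori reason such a cocycle exists. The paper's solution is quite different and genuinely non-obvious: the invariant is the \emph{parity} of the number of $2\times 2$ tiles of a specific type A, and its invariance under $3+1$ recoloring is proved by pairing it with a second, rather arbitrary-looking family of type-B tiles (which can only occur when color $4$ is present, so both endpoints of any schedule have even B-parity) together with a computer-verified local lemma (Lemma~\ref{lem:ABparity}) that any legal single-node move flips A-parity iff it flips B-parity. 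Separately, your proposed witness colorings $c_{i,j}=(i\pm j)\bmod 3$ are only proper on the torus when $3\mid h$ and $3\mid w$, so they do not cover the cases you need; the paper instead builds explicit colorings with a diagonal of $3$s padded by duplicated rows/columns, tracking the A-parity of each padding step.

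The positive direction is closer in spirit to the paper but is also under-specified. For $w$ odd there is no proper $2$-coloring of the columns, so your ``odd-indexed then even-indexed columns'' two-pass scheme breaks at the wrap-around, and you do not say why the residual components after staging through color $4$ are recolorable without the extra color. The paper's Lemma~\ref{lem:grids4xw} handles this cleanly: pick a maximal set of \emph{pairs} of adjacent columns, put a suitable maximal independent set $R$ into color $4$, and observe that the components of $G\setminus R$ are $4$-cycles, which are always $3$-recolorable; that last fact is the load-bearing step your sketch is missing.
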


This also shows that $3+1$ recoloring is an inherently global problem in toroidal grids, even if we have a promise that recoloring is possible. For example, if there was a sublinear-time distributed recoloring algorithm $A$ for $6 \times w$ grids for an even $w$, we could apply the same algorithm in a $6 \times w$ grid with an odd $w$ (the algorithm cannot tell the difference between these two cases in time $o(w)$), and hence we could solve recoloring in $6 \times w$ grids for all $w$, which contradicts Theorem~\ref{thm:grids}. By a similar argument, distributed recoloring in non-toroidal grids is also an inherently global problem.

\subparagraph{Existence.}

To prove Theorem~\ref{thm:grids}, let us start with the positive results. If $h$ and $w$ are even, the graph is bipartite and recoloring is always possible by Lemma~\ref{lem:bipartite}. The remaining cases are covered by the following lemma.
\begin{lem}\label{lem:grids4xw}
    Let $G$ be a $4 \times w$ toroidal grid for any $w \ge 3$, and let $s$ and $t$ be any $3$-colorings. Then there exists a recoloring from $s$ to $t$ with one extra color.
\end{lem}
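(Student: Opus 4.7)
The plan is to reduce the general case to navigating a canonical subfamily of 3-colorings. Call a 3-coloring of the $4\times w$ torus \emph{column-balanced} if in every column (a 4-cycle) rows $0,2$ share a color and rows $1,3$ share a color, so the column has the pattern $(a,b,a,b)$. Such a coloring is encoded by two sequences $(a_j)_{j\in \mathbb{Z}_w}$ and $(b_j)_{j\in \mathbb{Z}_w}$, each a proper 3-coloring of the cycle $C_w$, with the coupling constraint $a_j \ne b_j$. I would prove the lemma by composing three recoloring schedules: normalize $s$ to a column-balanced coloring $\hat s$; connect $\hat s$ to a column-balanced $\hat t$ obtained analogously from $t$; then time-reverse the normalization of $t$.

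For the normalization, I start from the classical observation that every proper 3-coloring of a 4-cycle has two opposite vertices of equal color, so in every column at least one of $\{(0,j),(2,j)\}$ and $\{(1,j),(3,j)\}$ is already monochromatic. The goal is to make the other pair monochromatic too, without disturbing the first. Concretely, I would recolor row $2$ globally so that $(2,j) = s(0,j)$ for every $j$, and then recolor row $3$ so that $(3,j) = s(1,j)$ for every $j$. Each of these is a recoloring of a $C_w$ with fixed external constraints from two other rows (via column edges). The extra color $4$ is used as a buffer: each vertex $(2,j)$ first moves to color $4$ and then to its target $s(0,j)$, with vertices batched into independent sets of the row cycle (two batches if $w$ is even, three if $w$ is odd) to legalize simultaneous moves.

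For the canonical equivalence, changing $a_{j_0}$ corresponds to recoloring the two non-adjacent vertices $(0,j_0)$ and $(2,j_0)$ to a common new value; since they form an independent set, this is a single legitimate move, and likewise for $b_{j_0}$ via $(1,j_0)$ and $(3,j_0)$. Hence moving between two column-balanced colorings reduces to reconfiguring the two cycle-sequences $(a_j)$ and $(b_j)$, each of which can be handled by Lemma~\ref{lem:3plus1paths} on $C_w$. The coupling $a_j \ne b_j$ is preserved by alternating $a$- and $b$-updates and, whenever a cycle step would momentarily violate column-properness, by parking the offending column's rows $0,2$ (or $1,3$) at color $4$ for the duration of the conflict.

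The main obstacle I expect is the normalization step: when the two fixed adjacent-row colors disagree at some column, the position $(2,j)$ (or $(3,j)$) has only a two-element list in $\{1,2,3\}$, so the row-cycle recoloring is delicate and must be analyzed carefully case by case. Here the critical role of the hypothesis $h=4$ is that it makes rows $0$ and $2$ (and likewise rows $1$ and $3$) non-adjacent within each column, so updates can be synchronized across them without creating column-level conflicts, and the extra color $4$ can be exploited both as an extra list element and as a row-wise buffer that decouples flips at neighboring columns; together these make the normalization achievable in bounded schedule length.
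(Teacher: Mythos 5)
Your normalization step is sound, and in fact easier than you fear: because rows $0$ and $3$ are adjacent in the $4\times w$ torus, the prescribed target $s(0,j)$ for vertex $(2,j)$ automatically differs from both of its column neighbours $s(1,j)$ and $s(3,j)$, so the two-element-list difficulty you worry about never arises --- sending all of row $2$ to colour $4$ (in two or three independent batches) and then to $(s(0,j))_j$ (again in batches, after the whole row is parked) is legal, and likewise for row $3$ afterwards. The genuine gap is in the middle step. Once both colourings are column-balanced, your elementary moves on the pair $\bigl((a_j),(b_j)\bigr)$ are exactly single-vertex $3+1$ recoloring moves on the prism graph over $C_w$ (the Cartesian product of $C_w$ with a single edge): $a_j$ may change to any colour of $[4]$ other than $a_{j-1}$, $a_{j+1}$ and $b_j$. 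So you have reduced the lemma to the claim that any two $3$-colorings of this cubic graph are connected under $3+1$ recoloring. For odd $w$ the prism is not bipartite, and for $w=3$ it is the triangular prism --- the classical exceptional graph for Kempe equivalence of $3$-colorings --- so this claim, while true, does not follow from Lemma~\ref{lem:3plus1paths} applied to the two cycles separately. Your proposed repair (``alternating $a$- and $b$-updates'' and ``parking the offending column at colour $4$'') is precisely where the difficulty lives and is not an argument: when $a_j$ must move to a value equal to the current $b_j$, parking the $b$-pair at colour $4$ perturbs the $b$-cycle's own schedule, and you give no invariant showing that the cascade of such parkings stays proper and terminates.

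The reduction can be rescued by invoking Theorem~\ref{thm:cubic3-1} on the prism (it is cubic, and independent sets of the prism lift to independent sets of the torus under your correspondence), but as written the proof is incomplete at its central step. For contrast, the paper's proof avoids the prism entirely: it parks a suitably chosen maximal independent set $R$ at colour $4$ so that the remainder decomposes into $4$-cycles with pendant leaves, and then uses the fact that $C_4$ is the one cycle length all of whose $3$-colorings are connected by single-vertex recoloring. That choice of decomposition is exactly what substitutes for the prism argument missing from your sketch.
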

\begin{proof}
We first take an MIS $S$ over pairs of consecutive columns, i.e.\ a set of indices of the form $(i,i+1)$ such that every column $j \notin S$ is such that at least one of $j-1$ and $j+2$ belongs to $S$, every column $i \in S$ is such that precisely one of $i-1$ and $i+1$ is in $S$. Note that indices are taken modulo $w$. 
For every pair in $S$, we select a maximal independent set of the corresponding columns. The resulting union yields an independent set $R$. We then greedily make $R$ maximal columnwise away from $S$. We recolor $R$ with the extra color. It remains to argue that $G\setminus R$ can reach its targeted coloring. We note that since leaves are not problematic, removing $R$ essentially boils down to removing the columns with index in $S$. Note that the remaining connected components are cycles of length 4. Cycles of length 4 can be always $3$-recolored.

Note that the above proof yields in fact an $O(\log n)$ rounds algorithm that outputs an $O(1)$ schedule. We can improve it into an $O(1)$-round algorithm, simply by pointing out that there is only a finite number of possible colorings for a column, and two adjacent columns cannot have the same coloring. This allows us to compute $S$ in constant time. 
\end{proof}

 \subparagraph{Non-existence.}

Let us now prove the negative result. Our high-level plan is as follows. Let $G$ be an $h \times w$ toroidal grid. We will look at all \emph{tiles} of size $2 \times 2$. If $G$ is properly colored with $k$ colors, so is each tile. The following two tiles are of special importance to us; we call these tiles of \emph{type A}:
\[
\begin{bmatrix}
\B & \C \\
\C & \A
\end{bmatrix},
\quad
\begin{bmatrix}
\A & \C \\
\C & \B
\end{bmatrix}.
\]
We are interested in the \emph{number} of type-A tiles. For example, consider the following colorings of the $3 \times 3$ toroidal grid:
\[
s = \begin{bmatrix}
\A & \B & \C \\
\B & \C & \A \\
\C & \A & \B
\end{bmatrix},
\quad
t = \begin{bmatrix}
\C & \A & \B \\
\B & \C & \A \\
\A & \B & \C
\end{bmatrix}.
\]
Here $s$ contains $3$ tiles of type A (recall that we wrap around at the boundaries), while $t$ does not have any tiles of type A. In particular, $s$ has an odd number of type-A tiles and $t$ has an even number of type-A tiles. In brief, we say that the \emph{A-parity} of $s$ is odd and the A-parity of $t$ is even. It turns out that this is sufficient to show that recoloring from $s$ to $t$ with one extra color is not possible (see the full version of the article for the proof of this lemma):

\begin{lem}\label{lem:gridsAparity}
    Let $G$ be a toroidal grid, and let $s$ and $t$ be two $3$-colorings. If $s$ and $t$ have different A-parities, then it is not possible to recolor $G$ from $s$ to $t$ with $1$ extra color.
\end{lem}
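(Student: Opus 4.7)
The plan is to exhibit a $\mathbb{Z}/2$-valued invariant $\Phi$ of the recoloring dynamics that restricts to A-parity on every 3-coloring. First I reduce to single-node moves: since each $C_i$ is independent, the simultaneous changes at its nodes commute, and any linear ordering of them produces a valid sequence of single-node moves with the same endpoints. So it suffices to check invariance of $\Phi$ under one move $v \colon a \to b$, where $a \ne b$ lie in $\{1,2,3,4\}$ and every neighbor of $v$ avoids both $a$ and $b$.

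Next I analyze the raw type-A count $N_A$ under such a move. The node $v$ is a corner of exactly four $2\times 2$ tiles, lying on the main diagonal of two of them and on the anti-diagonal of the other two. A direct case check shows that whenever the four neighbors of $v$ use only colors in $\{1,2,3\}$ (so in particular for every move entirely inside a 3-coloring), the local contributions to the change of $N_A$ come in complementary pairs and cancel modulo $2$. The obstruction only appears when color $4$ enters the neighborhood of $v$: for instance, for a move $v \colon 1 \to 2$ with $N = W = 3$ but $E$ or $S$ equal to $4$, the tile-SE contribution is missing and a single type-A tile may be created or destroyed without a partner. Thus $N_A \pmod 2$ is not invariant on general 4-colorings, so a more refined invariant is needed.

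To fix this I introduce a carefully chosen auxiliary count $N_B$ of tiles that look like type-A but involve color $4$: concretely, tiles whose anti-diagonal is $(3,3)$ and whose main diagonal is $\{1,4\}$ or $\{2,4\}$, and symmetrically tiles in which a $4$ replaces one of the anti-diagonal $3$'s. The definition of $N_B$ is to be fixed so that whenever a single move creates or destroys a type-A tile through a color-$4$ interaction, a tile counted by $N_B$ is destroyed or created in tandem, keeping $N_A + N_B$ constant modulo $2$. Since every tile counted by $N_B$ contains a color-$4$ corner, $N_B$ vanishes on any 3-coloring, and $\Phi := N_A + N_B \pmod 2$ then agrees with A-parity on both $s$ and $t$.

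Finally, once $\Phi$ is shown to be move-invariant along the entire schedule, the equality $\Phi(s) = \Phi(t)$ forces the A-parities of $s$ and $t$ to coincide, contradicting the hypothesis and ruling out any such schedule. The main difficulty, and the heart of the proof, is pinning down $N_B$ so that the local cancellations really do occur across all essentially different move types $(a,b)$ and all admissible local color patterns around $v$; once $N_B$ is correctly chosen, the remaining verification is a finite and mechanical case check.
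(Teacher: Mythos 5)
Your overall strategy is exactly the one the paper uses: serialize the schedule into single-node moves (your justification via independence of each $C_i$ is correct), observe that the raw count of type-A tiles modulo $2$ is not preserved once color $4$ appears, and repair this by adding an auxiliary count $N_B$ supported on tiles containing color $4$, so that $N_A+N_B \pmod 2$ is move-invariant while $N_B$ vanishes on every proper $3$-coloring. The conclusion from there is also the right one.

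The gap is that you never actually produce $N_B$: you write that its definition ``is to be fixed so that'' the cancellations occur, but the existence of such a tile set is precisely the nontrivial content of the lemma and cannot be assumed. The choice is delicate. The paper needs a list of $14$ type-B tiles, and several of them do not fit your tentative description at all --- for instance $\begin{bsmallmatrix}2&1\\1&4\end{bsmallmatrix}$ and $\begin{bsmallmatrix}3&1\\1&4\end{bsmallmatrix}$ have anti-diagonal $(1,1)$ and contain no $3$ on the anti-diagonal, so they are neither ``type-A-like with a $4$ on the main diagonal'' nor ``type-A with a $4$ replacing an anti-diagonal $3$.'' The paper itself describes the correct collection as looking ``somewhat arbitrary'' and establishes the invariance (A-parity changes iff B-parity changes under a single legal move) only by an exhaustive machine check over all $3\times 3$ neighborhoods and all admissible recolorings of the center. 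So your proposal correctly identifies the shape of the argument but defers its heart: without an explicit, verified $N_B$, the claimed invariant $\Phi$ is not established, and the proof is incomplete.
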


\begin{proof}
Let us now return to the last missing ingredient: the proof of Lemma~\ref{lem:gridsAparity}, which says that $3+1$ recoloring from $s$ to $t$ is not possible if we have different A-parities.

Intuitively, we would like to prove that recoloring operations preserve the A-parity. Unfortunately, this is not the case, as we can use the extra color. For example, if you take a proper $3$-coloring $s$ with an odd A-parity and replace all nodes of color $3$ with color $4$, you will have a proper $4$-coloring $t$ with an even A-parity.

It turns out that recoloring operations do preserve a certain kind of parity of $2 \times 2$ tiles, but it is much more involved than merely preserving type-A parity. We introduce a new set of $2 \times 2$ tiles that we call \emph{type-B} tiles; the parity of the number of type-B tiles is called {B-parity}:
\begin{align*} &
\begin{bmatrix}
\B & \A \\ \A & \D
\end{bmatrix},\quad
\begin{bmatrix}
\C & \A \\ \A & \D
\end{bmatrix},\quad
\begin{bmatrix}
\B & \A \\ \C & \D
\end{bmatrix},\quad
\begin{bmatrix}
\B & \C \\ \A & \D
\end{bmatrix},\quad
\begin{bmatrix}
\A & \C \\ \D & \B
\end{bmatrix},\quad
\begin{bmatrix}
\C & \B \\ \A & \D
\end{bmatrix},\quad
\begin{bmatrix}
\B & \C \\ \C & \D
\end{bmatrix},\\ &
\begin{bmatrix}
\D & \A \\ \A & \B
\end{bmatrix},\quad
\begin{bmatrix}
\D & \A \\ \A & \C
\end{bmatrix},\quad
\begin{bmatrix}
\D & \A \\ \C & \B
\end{bmatrix},\quad
\begin{bmatrix}
\D & \C \\ \A & \B
\end{bmatrix},\quad
\begin{bmatrix}
\B & \C \\ \D & \A
\end{bmatrix},\quad
\begin{bmatrix}
\D & \B \\ \A & \C
\end{bmatrix},\quad
\begin{bmatrix}
\D & \C \\ \C & \B
\end{bmatrix}.
\end{align*}

The collection of type-B tiles looks indeed somewhat arbitrary, but the following property is easy to verify: there is exactly one node of color $4$ in each type-A tile. Therefore if $x$ is a proper $3$-coloring, then B-parity of $x$ is even. In particular, in $3+1$ recoloring, the initial coloring $s$ and the target coloring $t$ both have even B-parities.

The magic behind the choice of type-B tiles is that it happens to satisfy the following property.

\begin{lem}\label{lem:ABparity}
If you change the color of one node in a properly $4$-colored grid so that the result is also a proper $4$-coloring, A-parity changes if and only if B-parity changes.
\end{lem}

\begin{proof}
Enumerate all $3\times 3$ neighborhoods and all possible ways to change the middle node, and check that the claim holds. We have made a computer program that verifies the claim and a human-readable list of all cases available online.\footnote{\url{https://github.com/suomela/recoloring}}
\end{proof}

Now any $3+1$ recoloring from $s$ to $t$ can be serialized so that we change the color of one node at a time. We start with an even B-parity (no type-B tiles in $s$), apply Lemma~\ref{lem:ABparity} at each step, and eventually we arrive at even B-parity (no type-B tiles in $t$). As B-parity did not change between $s$ and $t$, also A-parity cannot change between $s$ and $t$. This concludes the proof of Lemma~\ref{lem:gridsAparity}.
\end{proof}

Hence the A-parity of a coloring partitions the space of colorings in two components that are not connected by $3+1$ recoloring operations. To complete the proof of Theorem~\ref{thm:grids}, it now suffices to construct a pair of $3$-colorings with different A-parities for each relevant combination of $h$ and $w$.

\subparagraph{\boldmath Odd $h$, odd $w$.} First assume that both $h$ and $w$ are odd. The simplest case is $h = w$. In that case we can simply have $3$s on the anti-diagonal and color all remaining areas with colors $1$ and $2$; this gives a coloring $s$ with $h$ type-A tiles (odd type-A parity). If we put $3$s on the diagonal, we can construct a coloring $t$ with $0$ type-A tiles (even type-A parity). Here are examples for $h = w = 5$:
\[
s = \begin{bmatrix}
\A & \B & \A & \B & \C \\
\B & \A & \B & \C & \A \\
\A & \B & \C & \A & \B \\
\B & \C & \A & \B & \A \\
\C & \A & \B & \A & \B 
\end{bmatrix},
\quad
t = \begin{bmatrix}
\C & \A & \B & \A & \B \\
\B & \C & \A & \B & \A \\
\A & \B & \C & \A & \B \\
\B & \A & \B & \C & \A \\
\A & \B & \A & \B & \C 
\end{bmatrix}.
\]
If $h \ne w$, assume w.l.o.g.\ that $h < w$, and in particular, $w = h + 2\ell$ for some $\ell$. Then we can take the diagonal construction for $h \times h$ and add $\ell$ copies of the two rightmost columns. For example, for $h = 5$ and $w = 9$ (and hence $\ell = 2$) we get
\[
s = \begin{bmatrix}
\A & \B & \A & \B & \C & \B & \C & \B & \C \\
\B & \A & \B & \C & \A & \C & \A & \C & \A \\
\A & \B & \C & \A & \B & \A & \B & \A & \B \\
\B & \C & \A & \B & \A & \B & \A & \B & \A \\
\C & \A & \B & \A & \B & \A & \B & \A & \B 
\end{bmatrix},
\quad
t = \begin{bmatrix}
\C & \A & \B & \A & \B & \A & \B & \A & \B \\
\B & \C & \A & \B & \A & \B & \A & \B & \A \\
\A & \B & \C & \A & \B & \A & \B & \A & \B \\
\B & \A & \B & \C & \A & \C & \A & \C & \A \\
\A & \B & \A & \B & \C & \B & \C & \B & \C 
\end{bmatrix}.
\]
Note that each additional pair of columns results in one new tile of type A in both $s$ and $t$; hence overall we will have $h + \ell$ type-A tiles in $s$ and $\ell$ type-A tiles in $t$, and as $h$ was odd, the parity differs.

\subparagraph{\boldmath Odd $h$, even $w \ne 4$.} The case that remains to be considered is that exactly one of $h$ and $w$ is odd; w.l.o.g., assume that $h$ is odd and $w$ is even. Also recall that $w \ne 4$, and hence we can focus on the case $h \ge 3$ and $w \ge 6$.

For the base case of $h = 3$ and $w = 6$ we can use the following configuration; here in $s$ there is a sequence of $3$s that \emph{wraps around vertically twice}, while in $t$ there is a sequence of $3$s that does not wrap around vertically. Here $s$ has got $6$ type-A tiles (even), while $t$ has got $3$ type-A tiles (odd):
\[
s = \begin{bmatrix}
\A & \B & \C & \A & \B & \C \\
\B & \C & \A & \B & \C & \A \\
\C & \A & \B & \C & \A & \B 
\end{bmatrix},
\quad
t = \begin{bmatrix}
\A & \B & \A & \B & \A & \B \\
\C & \A & \C & \A & \C & \A \\
\B & \C & \B & \C & \B & \C 
\end{bmatrix}.
\]

If $h = 3$ and $w = 6 + 2\ell$, we can take the above construction and pad it by duplicating the two leftmost columns $\ell$ times. Each such duplication results in one new type-A tile in both configurations, maintaining the difference in parities. For example, for $h = 3$ and $w = 8$ we get
\[
s = \begin{bmatrix}
\A & \B & \A & \B & \C & \A & \B & \C \\
\B & \C & \B & \C & \A & \B & \C & \A \\
\C & \A & \C & \A & \B & \C & \A & \B 
\end{bmatrix},
\quad
t = \begin{bmatrix}
\A & \B & \A & \B & \A & \B & \A & \B \\
\C & \A & \C & \A & \C & \A & \C & \A \\
\B & \C & \B & \C & \B & \C & \B & \C 
\end{bmatrix}.
\]
Finally, if $h = 3 + 2\ell$, we can take the above construction for $h = 3$ and take $2\ell$ copies of the top row, shifting it back and forth to preserve the coloring. For example, for $h = 7$ and $w = 8$ we get
\[
s = \begin{bmatrix}
\A & \B & \A & \B & \C & \A & \B & \C \\
\B & \A & \B & \C & \A & \B & \C & \A \\
\A & \B & \A & \B & \C & \A & \B & \C \\
\B & \A & \B & \C & \A & \B & \C & \A \\
\A & \B & \A & \B & \C & \A & \B & \C \\
\B & \C & \B & \C & \A & \B & \C & \A \\
\C & \A & \C & \A & \B & \C & \A & \B 
\end{bmatrix},
\quad
t = \begin{bmatrix}
\A & \B & \A & \B & \A & \B & \A & \B \\
\B & \A & \B & \A & \B & \A & \B & \A \\
\A & \B & \A & \B & \A & \B & \A & \B \\
\B & \A & \B & \A & \B & \A & \B & \A \\
\A & \B & \A & \B & \A & \B & \A & \B \\
\C & \A & \C & \A & \C & \A & \C & \A \\
\B & \C & \B & \C & \B & \C & \B & \C 
\end{bmatrix}.
\]
This way we preserve the basic topological structure, with a sequence of $3$s wrapping around vertically twice in $s$ and zero times in $t$. Note that in $s$ we will get $2\ell$ new type-A tiles (as there are $2$ nodes of color $3$ in the top row) and in $t$ we will get $0$ new type-A tiles, again preserving the parity difference. This concludes the proof of Theorem~\ref{thm:grids}.

\section{Simple corollaries}\label{sec:simplecor}

\begin{lem}\label{lem:MISplusforest}
Assume that we are given a graph $G$ and input and target colorings with $k \ge 3$ colors. Assume that in $O(f(n))$ rounds we can find an independent set $I$ of $G$ such that $V \setminus I$ induces a forest of trees of depth at most $O(d(n))$. Then in $O(f(n)+d(n))$ rounds we can solve $k+1$ recoloring, with a schedule of length $O(d(n))$.
\end{lem}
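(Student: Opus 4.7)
The plan is to mirror the warmup version of this statement (from Section~\ref{sec:simpleub}) and the overall blueprint used for Theorems~\ref{thm:tree3-1} and~\ref{thm:cubic3-1}: use the extra color as a ``parking lot'' for the independent set $I$, recolor the remaining forest in place, and then move $I$ directly to its target colors.

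First, I would invoke the hypothesized subroutine to compute the independent set $I$ in $O(f(n))$ rounds. Then, in a single synchronous step, every node $v \in I$ recolors from $\alpha(v)$ to $k+1$. This is a valid distributed recoloring move: the changing set $I$ is independent, and no node currently carries color $k+1$, so no conflict is created with any neighbor (which lives in $V \setminus I$ and is colored in $[k]$).

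Next, the induced graph $G[V \setminus I]$ is, by hypothesis, a forest whose tree components have radius $O(d(n))$. On each such tree component, assign every node the list $[k]$ (which has size at least $3$ since $k \geq 3$) and apply Lemma~\ref{lem:recoltreelist} to $L$-recolor the component from $\alpha$ restricted to $V\setminus I$ to $\beta$ restricted to $V \setminus I$. This takes $O(d(n))$ rounds and produces a schedule of length $O(d(n))$ per component. All intermediate colors used in this phase lie in $[k]$, so they cannot conflict with the color $k+1$ currently held by $I$, and independent components can be processed in parallel without interference.

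Finally, every node $v \in I$ recolors from $k+1$ to $\beta(v)$ in one further parallel step. This is valid since $I$ is independent (simultaneous changes are allowed) and every neighbor $u$ of $v$ now carries color $\beta(u) \ne \beta(v)$, as $\beta$ is a proper coloring of $G$. The total round complexity is $O(f(n)) + O(d(n)) + O(1) = O(f(n)+d(n))$, and the overall schedule has length $1 + O(d(n)) + 1 = O(d(n))$, matching the claim. There is no real obstacle here beyond bookkeeping; the only point worth flagging is that the condition $k \geq 3$ is used precisely to guarantee that Lemma~\ref{lem:recoltreelist} applies with the uniform list $[k]$.
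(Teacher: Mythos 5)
Your proof is correct and follows essentially the same plan as the paper's: park the independent set $I$ on the extra color $k+1$, recolor each tree component of $G[V\setminus I]$ in place using the radius-bounded tree recoloring subroutine, then send $I$ directly to its target colors. The one small deviation is that you invoke Lemma~\ref{lem:recoltreelist} with the uniform list $[k]$, whereas the paper cites the algorithm from Lemma~\ref{lem:recoltree}; since Lemma~\ref{lem:recoltree} is stated only for $3$-colorings while the present lemma allows any $k \ge 3$, your citation is actually the cleaner one and handles the $k > 3$ case without any implicit generalization.
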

\begin{proof}
Each node in $I$ switches to color $k+1$. We then use the algorithm described in the proof of Lemma \ref{lem:recoltree} to find a recoloring with schedule of length $O(d(n))$ for each connected component after the removal of $I$. After that, each node of $I$ can switch to its final color.
\end{proof}
\subparagraph{Acknowledgments.}

The authors would like to thank Nicolas Bousquet for helpful discussions regarding the proof of Lemma~\ref{lem:thereisthetaor2}.

\begin{table}[p]
\caption{Results: distributed recoloring in cycles (C) and paths (P).}
\label{tab:cycles}
\centering
\begin{tabular}{llllll}
\toprule
graph & input & extra & schedule & communication & reference \\
family & colors & colors & length & rounds & \\
\midrule
C/P & 2 & 0 & $\infty$ & & Lemma~\ref{lem:needsextra} \\
C/P & 2 & 1 & $\cO(1)$ & $0$ & Lemma~\ref{lem:minusone} \\
\midrule
C & 3 & 0 & $\infty$ & & Lemma~\ref{lem:needsextra} \\
P & 3 & 0 & $\Theta(n)$ & $\Theta(n)$ & Lemmas \ref{lem:3paths} and \ref{lem:3pathslb} \\
C/P & 3 & 1 & $\cO(1)$ & $\cO(1)$ & Lemma~\ref{lem:3plus1paths} \\
C/P & 3 & 2 & $\cO(1)$ & $0$ & Lemma~\ref{lem:minusone}\\
\midrule
C/P & 4 & 0 & $\cO(1)$ & $\cO(1)$ & Lemmas \ref{lem:3plus1paths} and \ref{lem:beyonddelta} \\
C/P & 4 & 3 & $\cO(1)$ & $0$ & Lemma~\ref{lem:minusone}\\
\bottomrule
\end{tabular}
\end{table}

\begin{table}[p]
\caption{Results: distributed recoloring in $3$-regular trees.}
\label{tab:3regtree}
\centering
\begin{tabular}{lllll}
\toprule
input & extra & schedule & communication & reference \\
colors & colors & length & rounds & \\
\midrule
2 & 0 & $\infty$ & & Lemma~\ref{lem:needsextra} \\
2 & 1 & $\cO(1)$ & $0$ & Lemma~\ref{lem:minusone}\\
\midrule
3 & 0 & $\Theta(n)$ & $\Theta(n)$ & Lemmas \ref{lem:3paths} and \ref{lem:3pathslb} \\
3 & 1 & $\cO(1)$ & $\cO(\log n)$ & Theorem~\ref{thm:tree3-1}\\
3 & 2 & $\cO(1)$ & $0$ & Lemma~\ref{lem:minusone}\\
\midrule
4 & 0 & $\Theta(\log n)$ & $\Theta(\log n)$ & Theorem~\ref{thm:tree4} and Lemma~\ref{lem:4treelb} \\
4 & 1 & $\cO(1)$ & $\cO(1)$ & Lemma~\ref{lem:4plus1subcubic} \\
4 & 3 & $\cO(1)$ & $0$ & Lemma~\ref{lem:minusone}\\
\midrule
5 & 0 & $\cO(1)$ & $\cO(1)$ & Lemmas \ref{lem:4plus1subcubic} and \ref{lem:beyonddelta}\\
\bottomrule
\end{tabular}
\end{table}

\begin{table}[p]
\caption{Results: distributed recoloring in trees.}
\label{tab:trees}
\centering
\begin{tabular}{lllll}
\toprule
input & extra & schedule & communication & reference \\
colors & colors & length & rounds & \\
\midrule
2 & 0 & $\infty$ & & Lemma~\ref{lem:needsextra} \\
2 & 1 & $\cO(1)$ & $0$ & Lemma~\ref{lem:minusone}\\
\midrule
3 & 0 & $\Theta(n)$ & $\Theta(n)$ & Lemmas \ref{lem:3paths} and \ref{lem:3pathslb} \\
3 & 1 & $\cO(1)$ & $\cO(\log n)$ & Theorem~\ref{thm:tree3-1}\\
3 & 2 & $\cO(1)$ & $0$ & Lemma~\ref{lem:minusone}\\
\midrule
4 & 0 & $\Theta(\log n)$ & $\Theta(\log n)$ & Theorem~\ref{thm:tree4} and Lemma~\ref{lem:4treelb}\\
4 & 1 & $\cO(1)$ & $\cO(\log n)$ & Theorem~\ref{thm:tree3-1}\\
4 & 3 & $\cO(1)$ & $0$ & Lemma~\ref{lem:minusone}\\
\bottomrule
\end{tabular}
\end{table}

\begin{table}[p]
\caption{Results: distributed recoloring in toroidal grids. The distributed complexity of $4+1$ recoloring is left as an open question. However, by prior work it is known that $4+1$ recoloring is always possible: grids are $4$-regular graphs, therefore they are $4$-recolorable with Kempe operations, and hence also with $1$ extra color.}
\label{tab:grids}
\centering
\begin{tabular}{lllll}
\toprule
input & extra & schedule & communication & reference \\
colors & colors & length & rounds & \\
\midrule
2 & 0 & $\infty$ & & Lemma~\ref{lem:needsextra}\\
2 & 1 & $\cO(1)$ & $0$ & Lemma~\ref{lem:minusone}\\
\midrule
3 & 0 & $\infty$ & & Lemma~\ref{lem:needsextra} \\
3 & 1 & $\infty$ & & Theorem~\ref{thm:grids} \\
3 & 2 & $\cO(1)$ & $0$ & Lemma~\ref{lem:minusone}\\
\midrule
4 & 0 & $\infty$ & & Lemma~\ref{lem:needsextra} \\
4 & 1 & ? & ? \\
4 & 2 & $\cO(1)$ & $\cO(1)$ & Lemma~\ref{lem:4plus2grids} \\
4 & 3 & $\cO(1)$ & $0$ & Lemma~\ref{lem:minusone}\\
\midrule
5 & 0 & $\infty$ & & Lemma~\ref{lem:needsextra} \\
5 & 1 & $\cO(1)$ & $\cO(1)$ & Lemma~\ref{lem:5plus1grids} \\
5 & 4 & $\cO(1)$ & $0$ & Lemma~\ref{lem:minusone}\\
\midrule
6 & 0 & $\cO(1)$ & $\cO(1)$ & Lemmas \ref{lem:5plus1grids} and \ref{lem:beyonddelta} \\
\bottomrule
\end{tabular}
\end{table}

\begin{table}
\caption{Results: distributed recoloring in subcubic graphs.}
\label{tab:3reg}
\centering
\begin{tabular}{lllll}
\toprule
input & extra & schedule & communication & reference \\
colors & colors & length & rounds & \\
\midrule
2 & 0 & $\infty$ & & Lemma~\ref{lem:needsextra} \\
2 & 1 & $\cO(1)$ & $0$ & Lemma~\ref{lem:minusone}\\
\midrule
3 & 0 & $\infty$ & & Lemma~\ref{lem:needsextra} \\
3 & 1 & $\cO(\log n)$ & $\cO(\log^2 n)$ & Theorem~\ref{thm:cubic3-1}  \\
3 & 2 & $\cO(1)$ & $0$ & Lemma~\ref{lem:minusone}\\
\midrule
4 & 0 & $\infty$ & & Lemma~\ref{lem:needsextra} \\
4 & 1 & $\cO(1)$ & $\cO(1)$ & Lemma~\ref{lem:4plus1subcubic} \\
4 & 3 & $\cO(1)$ & $0$ & Lemma~\ref{lem:minusone}\\
\midrule
5 & 0 & $\cO(1)$ & $\cO(1)$ & Lemma~\ref{lem:beyonddelta}  \\
\bottomrule
\end{tabular}
\end{table}

\clearpage

\bibliographystyle{plainurl}
\bibliography{Bibliography}

\end{document}